\tikzstyle{small}=[font=\footnotesize]
\tikzset{
    every picture/.style={>=stealth,auto,node distance=2cm},
}
\newcommand{\N}{\mathbb{N}}
\newcommand{\x}{\times}
\newcommand{\R}{Spoiler}
\newcommand{\V}{Duplicator}
\newcommand{\step}[1]{\Step{#1}{}{}}
\newcommand{\wstep}[1]{\Wstep{#1}{}{}}
\newcommand{\Wstep}[3]{\ensuremath{\stackrel{#1}{\Longrightarrow}\stackrel{\scriptstyle{#2}}{\scriptstyle{#3}}}}
\newcommand{\Step}[3]{\ensuremath{\stackrel{#1}{\longrightarrow}\stackrel{\scriptstyle{#2}}{\scriptstyle{#3}}}}
\newcommand{\mbsim}{\sim}
\newcommand{\mwbsim}{\approx}
\newcommand{\Ord}{\ensuremath{Ord}}
\newcommand{\nic}[1]{}
\newcommand{\anybis}{\approx}
\newcommand{\brabis}{\approx_\text{b}}
\newcommand{\brabis}{\anybis}
\newcommand{\lra}[1][]{\stackrel{#1}{\longrightarrow}}
\newcommand{\Lra}[1][]{\stackrel{#1}{\Longrightarrow}}
\newcommand{\eps}{\varepsilon}
\newcommand{\Proc}[1]{#1^\otimes}
\newcommand{\Alf}{Act}
\newcommand{\slremove}[1]{}
\newcommand{\phremove}[1]{}
\renewcommand{\xi}{\zeta}
\newenvironment{proof}[1][proof]{\begin{trivlist}
\item[\hskip \labelsep {\itshape #1}]}{\end{trivlist}}
      \newcommand{\qed}{\hfill \mbox{\raggedright \rule{.07in}{.1in}}}
\newtheorem{theorem}{Theorem}[section]
\newtheorem{lemma}[theorem]{Lemma}
\newtheorem{proposition}[theorem]{Proposition}
\newtheorem{definition}[theorem]{Definition}
\newtheorem{example}[theorem]{Example}
\newtheorem{remark}[theorem]{Remark}
\begin{document}

\pagestyle{plain}

\title{Approximating Weak Bisimilarity of Basic Parallel Processes}

\author{ Piotr Hofman \and Patrick Totzke}

%\institute{Institute of Informatics, University of Warsaw \and LFCS, School of Informatics, University of Edinburgh\\}

\maketitle

\begin{abstract}
    This paper explores the well known approximation approach to decide weak bisimilarity of
    \emph{Basic Parallel Processes}. We look into how different refinement functions
    can be used to prove weak bisimilarity decidable for certain subclasses. We also show their
    limitations for the general case. In particular, we show a lower bound of $\omega*\omega$ for
    the approximants which allow weak steps and a lower bound of $\omega+\omega$ for the
    approximants that allow sequences of actions. The former lower bound negatively answers the open question
of Jan\v{c}ar and Hirshfeld.
\end{abstract}

\section{Introduction}
\emph{Basic Parallel Processes} (BPP) were introduced by Christensen \cite{Chr1993} as derivations of commutative context-free
grammars and are equi-expressible with communication-free Petri nets or process algebra using action prefixing, choice
and full merge only.
We are interested in deciding the problem of \emph{weak bisimilarity} for BPP, which remains unresolved even for
normed systems.

Christensen, Hirshfeld and Moller first prove the decidability of strong bisimulation between BPP \cite{CHM1993}, Srba and
Jan\v{c}ar \cite{Srb2002,Jan2003} show the PSPACE completeness of the problem.
For the subclass of normed systems -- where every process has a finite distance to termination -- a polynomial time
algorithm for bisimulation exists \cite{HJM1996a}.
On the negative side, Hirshfeld \cite{Hir1993} proves trace equivalence undecidable for BPP and
H{\"u}ttel \cite{H1994,HKS2009}
shows that indeed all equivalences that lie between strong bisimulation and trace equivalence in the linear/branching
time spectrum \cite{Gla2001} are undecidable.

The main obstacle for deciding weak bisimulation is that one abstracts from silent moves and therefore allows for infinite branching.
Weak bisimilarity is known to be PSPACE-hard for the whole class \cite{Srb2002} and still NP-hard \cite{Str1998} for the subclass of
totally normed systems, which forbids variables of zero and infinite norm.
Stirling \cite{Sti2001} showed that it is decidable for a non-trivial subclass
that still allows infinite branching albeit in a restricted form.  Branching bisimulation for normed
BPP is shown to be decidable in \cite{CHL2011}. However, the technique used there cannot be easily
transferred to work also for weak bisimulation. The problem is that in weak bisimulation games \V\
can go through many equivalence classes when making a move.  This makes it hard to find a connection
between the sizes of \V s configurations before and after move.

Milner originally defines (weak) bisimulation by refinement as the limit of a decreasing sequence
of approximants. This definition is known to coincide with the more customary co-inductive definition
due to Park but the sequence of approximants does not necessarily converge at a finite level for infinitely branching
systems.

We explore the \emph{approximation approach} which is outlined as follows. Weak
bisimilarity is a congruence over a commutative monoid and therefore semi-linear \cite{ES1969},
which means we can enumerate all candidate relations. The fact that the weak bisimulation condition
is expressible
in Presburger Arithmetic means that we can determine for each such candidate if it is a weak bisimulation that contains
a given pair. Hence, a semi-decision procedure for inequivalence immediately implies decidability.
The approximation method discussed here yields such a semi-decision procedure under two assumptions:
1) $\mwbsim$ is finitely approximable: The sequence of approximants stabilizes at level $\omega$, the first limit ordinal.
2) Each approximant $\mwbsim_o$ for $o<\omega$ is decidable.
If both hold true, one can simply iterate through all approximants and for each one check if the given pair of processes
is not contained. The first condition guarantees that this procedure terminates after finitely many rounds for any pair
of inequivalent processes.

Because finite approximation fails for most interesting subclasses we focus on more rigorous refinement functions than the
ones typically considered. We successfully apply the approximation method to restricted classes of BPP: We
derive a decision procedure for checking weak bisimulation for a class defined by Str\'{\i}brn{\'a} in \cite{Str1998}
that allows only a single visible action and no variables of $0$ norm.
Moreover, we provide a new proof for the decidability of weak bisimulation for the class defined by
Stirling \cite{Sti2001}.

We show a lower bound of $\omega * \omega$ for the convergence index of the approximants considered
previously, falsifying a conjecture that is attributed to Hirshfeld and
Jan\v{c}ar\footnote{To our knowledge this conjecture appears in print only in Str\'{\i}brn{\'a}'s PhD
    thesis \cite{Str1998}}
that approximants stabilise at level $\omega + \omega$.
Moreover we show that the most powerful notion of approximation under consideration, for which the individual approximants do not
even need to be decidable themselves are not guaranteed to converge below level $\omega+\omega$.

\section{Preliminaries}
    We write $\Proc{V}$ for the set of all multisets over the finite domain $V$, $\alpha \beta$ for the
    multiset union of $\alpha, \beta\in\Proc{V}$ and $\eps$ for the empty multiset.
    We use $\sqsubseteq$ for multiset (pointwise) inclusion and 
    $\mathcal{P}:V^*\to\Proc{V}$ is the \emph{Parikh} mapping that assigns a word over a finite
    alphabet the multiset that agrees on all multiplicities.
    Write $\Ord$ for the class of ordinal numbers.

    \begin{definition}[Basic Parallel Processes]
        A \emph{process description} is given by a finite set $V =\{X_1, \ldots, X_n\}$ of variables, a finite set
        $\Alf$ of actions and a finite set $T$ of transition rules of the form $X\step{a} \alpha$ where
        $X\in V$, $a\in\Alf$ and $\alpha\in\Proc{V}$.

        A \emph{process} is a multiset in $\Proc{V}$ and may be understood as the parallel composition 
        $X_1^{l_1}\ldots X_n^{l_n}$ of $l_1$ copies of $X_1$, \dots, and $l_n$ copies of $X_n$. 
        %Write $\eps$ for the \emph{empty process}, where $a_1 = \dots = a_n = 0$.
        The behavior of a process is determined by the following extension rule: 
        \[
            \text{if } X\step{a}\alpha\in T \text{ then }X\beta\step{a} \alpha\beta\text{ for any }\beta\in \Proc{V}.
        \]
    \end{definition}
    We assume a dedicated symbol $\tau \in \Alf$ that is used to model \emph{silent} steps $\step{\tau}$
    and define \emph{weak steps} by $\wstep{\tau} = \step{\tau}^*$ and
    $\wstep{a} = \step{\tau}^*\step{a}\step{\tau}^*$ for $a\in\Alf\setminus\{\tau\}$. Weak steps
    are extended to sequences of actions inductively: for the empty word let
    $\wstep{}=\wstep{\tau}=\step{\tau}^*$, for non-empty sequences define $\wstep{aw}=\wstep{a}\wstep{w}$ for $a\in\Alf, w\in\Alf^*$.
        %we write $\alpha \wstep{} \beta$ if process $\beta$ can be reached from $\alpha$ by possibly empty sequence of $\lra[\tau]$ steps.
        %For a given word $w=a_1a_2\dots a_k\in \Alf^*$ we write $\alpha \wstep{w} \beta$ if there is a path
        %of the form $\alpha\Lra\lra[a_1]\Lra\ldots\Lra\lra[a_k]\Lra\beta$ and say $\alpha$ makes a weak $w$-step to
        %$\beta$. Additionaly for $w\in \{\tau\}^*$ let $\wstep{w} = \wstep{}$.
        A \emph{deadlock} is a process that cannot make any non-silent steps.
        The \emph{norm} $|\alpha|$ of a process $\alpha$ is length of the shortest word $w\in \Alf^*$ such that
        $\alpha\wstep{w}\delta$ for a deadlock $\delta$ and $\infty$ if no such sequence exists.
        We call a system \emph{normed} if all its variables have finite norm.
%How about this:
%inductively define:
%=tau=>  = -tau->*
%=a=>    = -tau->*-a->-tau->*  for a\in (Act-tau)
%=aw=>   = =a=>=w=> for a\in Act, w\in Act*
%then we have defined =w=> for all possible w\in Act* consistently,

    \begin{definition}[Weak Bisimilarity]\label{def:wbis}
    A symmetric binary relation $B$ over processes is a \emph{weak bisimulation} iff every pair $\alpha B\beta$ and
    $a\in \Alf^*$ satisfies: if $\alpha\lra[a]\alpha'$ then $\beta\Lra[a]\beta'$ such that
    $\alpha' B \beta'$.
    Two processes $\alpha$ and $\beta$ are weakly bisimilar, denoted 
    $\alpha\brabis\beta$, 
     if there exists a weak bisimulation $B$ such that $\alpha B \beta$.
    \end{definition}

    Following \cite{Mil1989} we characterize weak bisimilarity inductively by refinement:
    
    \begin{definition}[Approximants]\label{def:approximants_orig}
        For a given monotone refinement function $\Psi:2^{\Proc{V}\x\Proc{V}}\to 2^{\Proc{V}\x\Proc{V}}$ we define a decreasing
        sequence of \emph{approximants}, subsets of $\Proc{V}\x\Proc{V}$ by transfinite induction:
        \begin{itemize}
            \item $\mwbsim_0 = \Proc{V}\x\Proc{V}$
            \item $\mwbsim_{i+1} = \Psi(\mwbsim_i)$ for successor ordinals $i+1$ and 
            \item $\mwbsim_\lambda = \bigcap_{i<\lambda} \mwbsim_i$ for limit ordinals $\lambda$
        \end{itemize}
    \emph{Weak Bisimulation approximants} are those based on the refinement function $\mathcal{F}$ that maps
    any $R\subseteq \Proc{V}\x\Proc{V}$ to the largest symmetric relation that satisfies for all $a\in \Alf$ and $\alpha'\in \Proc{V}$:
    \begin{align*}
        (\alpha,\beta)\in \mathcal{F}(R) \iff \alpha\step{a}\alpha'\ \text{ implies }\ \exists \beta'. \beta\wstep{a}\beta' \land (\alpha',\beta')\in R.
    \end{align*}
    \end{definition}

    Every post-fixpoint\footnote{an element $R\subseteq \Proc{V}\x\Proc{V}$ that satisfies $R\subseteq\mathcal{F}(R)$.}
    of $\mathcal{F}$ is a weak bisimulation and by a
    straightforward application of a fixpoint theorem due to Knaster and Tarski we see that the
    sequence of approximants defined by $\mathcal{F}$ converges to weak bisimilarity: $\mwbsim = \bigcap_{o\in \Ord}\mwbsim_o$.
    Thus, if we have a pair of inequivalent processes $\alpha,\beta$, then there is a least ordinal $c$ such that
    $\alpha\not\mwbsim_c\beta$. See \cite{Mil1989}, sec 4.6 for a more detailed account.
    Let the \emph{convergence index} for a class of processes be the least ordinal $c$ such
    that $\mwbsim = \mwbsim_c$ for any system of that class.
    %We say that weak bisimulation
    %is \emph{finitely approximable} for a class of systems if the convergence index is less or equal $\omega$, the first
    %limit ordinal.

    Weak bisimilarity can be characterized in terms of interactive games
    between two players, sometimes called \R\ and \V\ \cite{Sti1998}. For a given pair of processes $\alpha$ and $\beta$, the game 
    consists of a series of rounds. In each round \R\ chooses left or right process and performs a step from it, next \V\ must match this
    with an equally labeled weak step in the other process. If one of the players is not able to perform his next move
    then his opponent wins, infinite plays are won by \V.
    
    \begin{proposition}\label{prop:games}
    Two processes are weakly bisimilar iff \V\ has a strategy to win the bisimulation game regardless of his opponents
    choices.
    \end{proposition}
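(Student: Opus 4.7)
The plan is a standard two-direction argument linking the coinductive definition of weak bisimilarity with the existence of a winning strategy for Duplicator, with one small bookkeeping point since the definition in the paper quantifies over sequences $a\in\Alf^*$ while the game proceeds one action at a time.

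For the ``only if'' direction I would assume $\alpha\mwbsim\beta$, fix a weak bisimulation $B$ witnessing this, and describe a strategy for \V{} that preserves the invariant that the current game position $(\alpha_i,\beta_i)$ lies in $B$. Concretely, when \R{} chooses $\alpha_i\step{a}\alpha_{i+1}$ (the symmetric case is identical), the weak bisimulation condition, instantiated with the single-letter sequence $a\in\Alf$, produces $\beta_{i+1}$ with $\beta_i\wstep{a}\beta_{i+1}$ and $(\alpha_{i+1},\beta_{i+1})\in B$, which \V{} plays. To conclude that this strategy is winning, I would observe that in a finite play \V{} can always respond by the invariant, so the play can only end with \R{} being unable to move, and infinite plays are declared winning for \V{} by definition.

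For the ``if'' direction I would assume \V{} has a winning strategy $\sigma$ from $(\alpha,\beta)$ and let
\[
B\ =\ \{(\alpha',\beta')\mid (\alpha',\beta')\text{ is reachable from }(\alpha,\beta)\text{ in a play consistent with }\sigma\}\cup\{(\beta',\alpha')\mid\ldots\}.
\]
Symmetry is by construction. To verify the bisimulation condition for a single action $a\in\Alf$: if $(\alpha',\beta')\in B$ and $\alpha'\step{a}\alpha''$, then \R{} can play this move, and since $\sigma$ is winning, \V{}'s prescribed response yields $\beta'\wstep{a}\beta''$ with $(\alpha'',\beta'')\in B$. The definition in the paper additionally quantifies over sequences $a=a_1a_2\cdots a_n\in\Alf^*$; this I would recover by a straightforward induction on $n$, iterating the single-step property along a decomposition $\alpha'\step{a_1}\cdot\step{a_2}\cdots\step{a_n}\alpha''$ and concatenating the matching weak steps. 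Hence $B$ is a weak bisimulation containing $(\alpha,\beta)$, so $\alpha\mwbsim\beta$.

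The only mild subtlety is reconciling the single-step game with the $\Alf^*$-indexed definition, and this is handled by the routine induction just described. Everything else is a direct transcription of the standard Stirling-style game characterisation \cite{Sti1998}, so I expect no real obstacle beyond taking some care to treat the silent action $\tau\in\Alf$ uniformly (the game allows \R{} to play a $\tau$-step, which \V{} matches by $\wstep{\tau}=\step{\tau}^{\ast}$, consistently with Definition~\ref{def:wbis}).
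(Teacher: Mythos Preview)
The paper does not actually prove Proposition~\ref{prop:games}; it is stated without proof, with the game characterisation attributed to Stirling~\cite{Sti1998} as a well-known result. Your proposal therefore supplies an argument where the paper gives none. The argument you outline is the standard one and is correct: the ``only if'' direction uses a witnessing bisimulation as an invariant-preserving strategy, and the ``if'' direction extracts a bisimulation from the set of positions reachable under \V's winning strategy. You were right to flag the mismatch between the $\Alf^*$-indexed Definition~\ref{def:wbis} and the single-step game, and the induction on the length of the word is exactly what is needed to close that gap.
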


    In the same spirit we can define \emph{approximants games} % for ordinal level $i\in\Ord$
    to characterize weak bisimulation approximants. % $\mwbsim_i$.
    A configuration of the game consist of a number $o\in Ord$ and a pair of processes $\alpha$ and $\beta$.
    In each round \R\ chooses a new number $o'\in Ord$ such that $0\le o'<o$ and performs a step to
    $\alpha'$ from one of the processes. Then \V\ responds by an equally labeled weak step from the other process
    to some process $\beta'$. The game continues to the next round which starts from configuration $o', \alpha', \beta'$.
    If one of the players is not able to perform his next move then his opponent wins.  This game
    cannot continue indefinitely because $Ord$ is well-founded.

    \begin{proposition}\label{prop:app-games}
    For any $o\in\Ord$ %and processes $\alpha$ and $\beta$ we have
    $\alpha\mwbsim_o \beta$ iff $\V$
    %wins in approximants game which start from $o, \alpha, \beta.$
    has a strategy to win the approximant game from $(o, \alpha, \beta)$ regardless of his opponents choices.
    %If $c$ is a limit then \V\ chooses a next lower distinguishing ordinal when making his response.
    \end{proposition}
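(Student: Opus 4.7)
The plan is a transfinite induction on $o$, showing that the set $W_o$ of pairs from which Duplicator wins the approximant game starting at ordinal $o$ coincides with $\mwbsim_o$. This is a fairly standard game/fixpoint correspondence, so I expect the main work to be bookkeeping across the three ordinal cases rather than any deep insight.

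For the base case $o = 0$, Spoiler cannot choose any $o' < 0$ and therefore has no legal move, so Duplicator wins vacuously; this matches $\mwbsim_0 = \Proc{V}\x\Proc{V}$. For a successor $o = i+1$, I unfold the game one round: Duplicator wins from $(i+1,\alpha,\beta)$ iff for every Spoiler choice of $o' \le i$ together with a step $\alpha \step{a} \alpha'$ (or symmetric step from $\beta$), Duplicator has a response $\beta \wstep{a} \beta'$ from which he wins the continuation at $(o',\alpha',\beta')$. Since the approximants form a decreasing chain, Spoiler's optimal choice is $o' = i$; by the induction hypothesis this reduces exactly to the condition defining $\mathcal{F}(\mwbsim_i) = \mwbsim_{i+1}$.

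The limit case $o = \lambda$ is where one needs to be a bit careful, and this is the step I'd budget the most attention for. Duplicator wins from $(\lambda,\alpha,\beta)$ iff for every $o' < \lambda$ and every Spoiler move, there is a response landing in $W_{o'} = \mwbsim_{o'}$ (by induction). For the forward direction, assume $\alpha \mwbsim_\lambda \beta$, fix a Spoiler choice $(o',\alpha \step{a}\alpha')$; since $o'+1 < \lambda$ and $\alpha \mwbsim_{o'+1}\beta$, the defining property of $\mathcal{F}(\mwbsim_{o'})$ supplies the required $\beta'$. Conversely, if the game is won and $o' < \lambda$ is arbitrary, the same reasoning shows $(\alpha,\beta) \in \mathcal{F}(\mwbsim_{o'}) = \mwbsim_{o'+1} \subseteq \mwbsim_{o'}$, and intersecting over $o' < \lambda$ gives $\alpha \mwbsim_\lambda \beta$. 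Note that Duplicator's strategy is allowed to depend on Spoiler's choice of $o'$, so there is no need to construct a uniform response across the whole cofinal sequence.

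Finally I would remark that the symmetry of the definitions (Spoiler may pick either side, $\mathcal{F}(R)$ is the largest symmetric relation with the given property) means the two directions of the game argument are handled jointly, and that well-foundedness of $\Ord$ guarantees the induction covers every configuration. The whole argument is essentially the standard unfolding of Milner's inductive characterisation into an Ehrenfeucht-style game, transported verbatim to the BPP setting; no property specific to multisets or the weak step relation is used beyond what is already encoded in $\mathcal{F}$.
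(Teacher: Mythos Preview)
Your transfinite induction is correct and is the standard way to establish this game/approximant correspondence. Note, however, that the paper does not actually supply a proof of Proposition~\ref{prop:app-games}: it is stated as a proposition and followed only by an informal paragraph of intuition (that \R's choice of $o'$ amounts to asserting he can win within $o'$ further rounds, and that at a limit level \V\ need only have \emph{some} response for each smaller ordinal). Your argument is precisely the formalisation of that intuition --- the successor case reduces to the refinement step $\mathcal{F}(\mwbsim_i)=\mwbsim_{i+1}$ once one observes that \R's best choice is the maximal $o'=i$, and the limit case uses that $o'+1<\lambda$ to push the needed membership down into $\mwbsim_{o'}$. There is nothing to correct; you have simply written out what the paper leaves implicit.
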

    The intuition is that whenever \R\ makes his move to $o', \alpha'$ he asserts that he can win
    the bisimulation game in fewer than $o'$ rounds from the next round onwards, for any possible
    response of his opponent.
    \V\ wins the approximants game at some limit ordinal level only if for all smaller ordinals
    $o'$ he has some response that allows him to win at level $o'$.
    If in the following we write \R\ can \emph{distinguish} processes $\alpha$ and $\beta$ in $o$ rounds we
    mean that \R\ wins the approximant game from $(o,\alpha,\beta)$.

    \begin{example}\label{example:gg}
        Consider the process description given below, where the left-hand side
        is a graphical depiction of the rules listed to the right.
        The left shows a loop $Y \step{aA} Y$ whenever there is a rule $Y\step{a}YA$ in the
        process definition on the right-hand side.
        \vspace{0.5cm}

        {
        \centering
            \begin{minipage}[c]{0.5\textwidth}
                \begin{tikzpicture}
                    \node (X) {$X$};
                    \node (Z) [below of=X] {$Z$};
                    \node (Y) [right of=X] {$Y$};
                    \node (e) [below of=Y] {$\varepsilon$};
                    \node (A) [right of=e] {$A$};
                    \path[->] (X) edge node[left] {$b$} (Z);
                    \path[->] (Y) edge node[right] {$b$} (e);
                    \path[->] (X) edge node {$\tau$} (Y);
                    \path[->] (Z) edge node {$\tau$} (e);
                    \path[->,bend right] (A) edge[above] node {$\tau$} (e);
                    \path[->,bend left] (A) edge node {$a$} (e);
                    \path[->] (Y) edge [loop right] node {$\tau A$} (Y);
                    \path[->] (Z) edge [loop left] node {$\tau A$} (Z);
               \end{tikzpicture}
            \end{minipage}
            \begin{minipage}[c]{.5\textwidth}
                \begin{align*}
                    X \step{\tau} &Y,
                    X \step{b} Z, \\
                    Y \step{b} &\varepsilon,
                    Y \step{\tau} YA, \\
                    Z \step{\tau} &\varepsilon,
                    Z \step{\tau} ZA, \\
                    A \step{\tau} &\varepsilon,
                    A \step{a} \varepsilon
                \end{align*}
            \end{minipage}
        }
    \end{example}
    The two processes $X$ and $Y$ are inequivalent, \R\ wins the bisimulation game by playing
    $(X\step{b}Z)$; any proper response is to $A^n$ for some $n$. Now \R\ continues to play
    $(Z\step{\tau}AZ\step{a}Z)$ $n$ times and wins in the next round. 
    Still, \V\ wins the approximant game from $(\omega,X,Y)$ because
    $ZA^i\mwbsim_jA^j$ for any two naturals $i,j$ and any \R\ attack to some $j,ZA^i$ in the fist round can be
    replied to by a weak step $Y\wstep{\tau^j}YA^j\step{b}A^j$. 
    Hence $\mwbsim \neq \mwbsim_\omega$.

    Example 1 shows that for the usual notion of approximants, the convergence index is above $\omega$, so the
    approximation method fails.  We will continue to investigate different refinement functions that yield faster
    converging weak bisimulation approximants.

\section{Approximants}
    Proposition \ref{prop:app-games} motivates the definition of alternative refinement functions
    and thus approximants by changing the rules of the approximants game. That is, we define
    sequences of faster converging approximants by describing the abilities of the two players to
    move in one round of the game.

    \begin{definition}\label{def:approximants}
        We define different approximants by describing the way both players are allowed to move
        during the approximants game. In all cases \R\ chooses the next lower ordinal and moves
        to some configuration, then \V\ moves from the other process.

        Define ordinary \emph{short-long} approximants $\mwbsim_i$ by the game in which
        \R\ moves along a strong step $\step{a}$, then \V\ responds using a weak step $\wstep{a}$.

        For \emph{long-long} approximants $\mwbsim^L_i$, \R\ makes a weak step
        $\wstep{a}$, then \V\ responds with a weak step $\wstep{a}$.

        For \emph{word} approximants $\mwbsim^W_i$, \R\ moves according to a \emph{sequence}
        $\wstep{w}$ of weak steps where $w\in\Alf^*$, then \V\ responds by a move $\wstep{w}$ over
        the same word.

        \emph{Parikh} approximants $\mwbsim^P_i$ are due the game where \R\ 
        makes a sequence of weak steps $\wstep{w}, w\in\Alf^*$, then \V\ responds
        by a sequence $\wstep{w'}$ in which the letters of $w$ are arbitrarily shuffled: $\mathcal{P}(w)=\mathcal{P}(w')$.
    \end{definition}
    Note that the short-long approximants defined here are exactly the ones given in Definition
    \ref{def:approximants_orig} and all others should converge faster as they give more power to \R.
    We continue to show that all four types of approximants are indeed correct notions of approximation
    for weak bisimilarity and do not converge towards something even smaller in the limit.
    Afterwards, we look at how suitable they are for the approximation method we have in mind.

    \begin{lemma}\label{lemma:approximants}
        For any ordinal $i$, 
        $\mwbsim\ \subseteq\ \mwbsim_i^{W}\ \subseteq\ \mwbsim_i^{P}\ \subseteq\ \mwbsim_i^L\ \subseteq\ \mwbsim_i$.
    \end{lemma}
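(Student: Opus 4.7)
The plan is to prove each of the four inclusions separately by transfinite induction on $i$, using the game characterizations from Proposition \ref{prop:app-games} and the analogous variants for $\mwbsim^L_i$, $\mwbsim^P_i$, and $\mwbsim^W_i$. The base case $i=0$ is trivial because every level-zero approximant equals $\Proc{V}\x\Proc{V}$, and the limit case follows immediately from the approximants at limit ordinals being defined as intersections. So all the content lies in the successor step.

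For the three rightmost inclusions the successor step is handled by a direct strategy-simulation argument that reuses Duplicator's winning strategy from the stronger game in the weaker one. For $\mwbsim^L_i \subseteq \mwbsim_i$, any short Spoiler attack $\alpha \step{a} \alpha'$ is also a legal long-long attack $\alpha \wstep{a} \alpha'$, so Duplicator's long-long winning response is automatically a short-long response. For $\mwbsim^P_i \subseteq \mwbsim^L_i$, a long-long attack $\wstep{a}$ is a Parikh attack over the one-letter word $w = a$, and the only word $w'$ with $\mathcal{P}(w') = \mathcal{P}(a)$ is $w' = a$ itself, so a Parikh response is automatically a long-long response. For $\mwbsim^W_i \subseteq \mwbsim^P_i$, the Parikh game only relaxes Duplicator's set of legal responses relative to the word game, so her word-strategy remains valid. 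In each case the inductive hypothesis provides the required containment for the processes reached after one round.

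The leftmost inclusion $\mwbsim \subseteq \mwbsim^W_i$ is where the main work lies. The key auxiliary claim to establish is that $\mwbsim$ itself is closed under matched weak word-transitions: if $\alpha \mwbsim \beta$ and $\alpha \wstep{w} \alpha'$, then there exists $\beta'$ with $\beta \wstep{w} \beta'$ and $\alpha' \mwbsim \beta'$. I would prove this by induction on $|w|$, first handling $w = \eps$ by iteratively matching the $\tau$-steps using Definition \ref{def:wbis}, and then peeling off one letter at a time. The main obstacle is correctly threading the $\tau$-prefixes and $\tau$-suffixes hidden inside each single-letter weak step when concatenating Duplicator's matches. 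Once this closure property is in hand, the transfinite induction on $i$ becomes routine: at the successor step, for any Spoiler attack $\alpha \wstep{w} \alpha'$ the closure lemma provides $\beta'$ with $\beta \wstep{w} \beta'$ and $\alpha' \mwbsim \beta'$, and the inductive hypothesis places the latter pair in $\mwbsim^W_i$.
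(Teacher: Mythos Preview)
Your proposal is correct and follows essentially the same approach as the paper: both arguments work by transferring strategies between the different approximant games, observing that a Spoiler attack legal in one game is also legal in the next-stronger one (or dually, that a Duplicator response legal in the stronger game is legal in the weaker). The only cosmetic differences are that the paper phrases the last two inclusions contrapositively (transferring a winning \R\ strategy rather than a winning \V\ strategy) and argues the first inclusion directly from the existence of a weak bisimulation $B$ rather than packaging the word-closure property as a separate lemma; your explicit transfinite induction and your induction on $|w|$ simply make the same reasoning more formal.
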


    \begin{proof}
        For the first inclusion
        %assume $(\alpha,\beta)\notin\ \mwbsim^W_i$, so \R\ can distinguish $\alpha$ and $\beta$
        %in $i$ rounds of a game where in each round he may make a sequence of weak steps from either side and his
        %opponent must respond with a move over the same sequence. Now
        assume that $(\alpha,\beta)$ is in $\mwbsim$, so
        there is a weak bisimulation $B$ containing this pair. This means for any move
        $\alpha_0\step{a_1}\alpha_1\step{a_2}\dots\step{a_k}\alpha_k, a_j\in\Alf$ there is a sequence
        $\beta_0\wstep{a_1}\beta_1\wstep{a_2}\dots\wstep{a_k}\beta_k$ with $\alpha_j B \beta_j$ for $j\le k$,
        so $B$ prescribes a winning strategy for \V\ in the word-approximant game.

        For the second inclusion observe that if \V\ has a response $\beta\wstep{w}\beta'$ for some attack
        $\alpha\wstep{w}\alpha'$ clearly the same response is allowed in the game where he may arbitrarily shuffle the
        letters of $w$.
      
        For the third inclusion assume $(\alpha,\beta)\not\in\ \mwbsim_i^L$, then \R\ can distinguish the two processes
        in $i$ rounds where he only uses weak steps $\wstep{a}$ labelled by single actions and his opponent may also respond
        using equally labelled weak steps. But the same strategy will be winning for \R\ if he is allowed to make
        steps $\wstep{w}$ due to sequences of actions and his opponent may arbitrarily shuffle the actions in his
        response: If an attack is due to a single action the response must be due to a single action. Thus \R\ can
        distinguish $(\alpha,\beta)$ in at most $i$ rounds of this Parikh-game: $(\alpha,\beta)\not\in\ \mwbsim_i^P$.
        The last inclusion follows similarly: If \R\ can distinguish two processes in $i$ rounds if he is only allowed to
        make strong steps $\step{a}$ and his opponent can do weak steps as response, then must also be able to
        distinguish the processes in at most $i$ rounds of a game in which he can also make weak attacks.
    \qed
\end{proof}

    \begin{theorem}\label{thm:convergence}
        $\mwbsim = \bigcap_{i\in \Ord}\mwbsim_i^W = \bigcap_{i\in \Ord}\mwbsim_i^P = \bigcap_{i\in \Ord}\mwbsim_i^L = \bigcap_{i\in \Ord}\mwbsim_i$
    \end{theorem}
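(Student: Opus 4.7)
The plan is to observe that Theorem 3.3 follows almost immediately from Lemma 3.2 together with the already-noted convergence of the ordinary short-long approximants, namely $\mwbsim = \bigcap_{o\in \Ord}\mwbsim_o$, which was derived from the Knaster--Tarski fixpoint theorem applied to $\mathcal{F}$. The only remaining work is to propagate the per-ordinal chain of inclusions through a global intersection.

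Concretely, I would proceed in three short steps. First, invoke Lemma 3.2 to conclude that for every ordinal $i$ we have
\[
\mwbsim\ \subseteq\ \mwbsim_i^{W}\ \subseteq\ \mwbsim_i^{P}\ \subseteq\ \mwbsim_i^L\ \subseteq\ \mwbsim_i.
\]
Second, note that since each of these inclusions holds for every $i$, it is preserved when intersecting over all ordinals:
\[
\mwbsim\ \subseteq\ \bigcap_{i\in\Ord}\mwbsim_i^{W}\ \subseteq\ \bigcap_{i\in\Ord}\mwbsim_i^{P}\ \subseteq\ \bigcap_{i\in\Ord}\mwbsim_i^L\ \subseteq\ \bigcap_{i\in\Ord}\mwbsim_i.
\]
Third, combine this with the known identity $\mwbsim = \bigcap_{i\in\Ord}\mwbsim_i$ recalled just after Definition 2.3; the chain then collapses, forcing every set in it to coincide with $\mwbsim$.

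There is really no hard step here: the argument is a standard sandwich argument, and the substantive content was already packed into Lemma 3.2 (the per-ordinal inclusions) and into the Knaster--Tarski argument for ordinary approximants. If anything deserves a brief remark, it is that one should intersect over all of $\Ord$ and not merely over a single level, since the inclusions of Lemma 3.2 go in the direction $\mwbsim \subseteq \mwbsim_i^W \subseteq \dots \subseteq \mwbsim_i$, so one cannot conclude equality at any fixed ordinal $i$ (indeed Example 2.5 shows that this is genuinely false for $i=\omega$), and the coincidence is only attained in the limit.
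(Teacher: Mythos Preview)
Your proposal is correct and matches the paper's own proof essentially verbatim: the paper also invokes Lemma~\ref{lemma:approximants} to obtain the chain of inclusions for every ordinal and then closes the sandwich using the known identity $\mwbsim = \bigcap_{i\in \Ord}\mwbsim_i$ from Milner. Your additional remark about why one must pass to the full intersection (rather than a fixed level) is a nice clarification but not required for the argument.
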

    \begin{proof}
        The chain of inclusions $\subseteq$ holds by transfinite induction using Lemma \ref{lemma:approximants}.
        Milner \cite{Mil1989} shows that sequence of short-long approximants converges to weak bisimilarity:
        $\mwbsim = \bigcap_{i\in \Ord}\mwbsim_i$.
    \qed
\end{proof}

    \begin{lemma}\label{lem:congruence}
        For any BPP description and ordinal $i$ we have
        \begin{enumerate}
            \item $\mwbsim_i^L,\mwbsim_i^P,\mwbsim_i^W$ are equivalences and% $\mwbsim_i$ is not in general.
            \item for $\mbsim\ \in\{\mwbsim_i,\mwbsim_i^L,\mwbsim_i^P,\mwbsim_i^W\}$ it holds that
                $\alpha\mbsim\beta$ implies $\alpha\gamma\mbsim\beta\gamma$.
        \end{enumerate}
    \end{lemma}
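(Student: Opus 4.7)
The plan is to prove both items simultaneously by transfinite induction on the approximant level $i$, using the game characterisation from Proposition \ref{prop:app-games}. The base case $i=0$ is trivial since $\mwbsim_0$ is the full relation; the limit case is immediate because reflexivity, symmetry, transitivity and the congruence property are all preserved by arbitrary intersection of relations. So the work is at successor stages.

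For item (1) at a successor stage $i{+}1$, I would handle each requirement via Spoiler/Duplicator strategies. Reflexivity: Duplicator's copy-cat strategy (respond with exactly Spoiler's move) is legal in all three games $L, P, W$, and reduces the ordinal on a pair $(\alpha,\alpha)$, so reflexivity transfers inductively. Symmetry is built into the games: Spoiler may attack on either side, so winning from $(\alpha,\beta)$ is the same as winning from $(\beta,\alpha)$. Transitivity is the interesting part: assuming $\alpha\mwbsim^L_{i+1}\beta$ and $\beta\mwbsim^L_{i+1}\gamma$, any Spoiler attack $\alpha\wstep{a}\alpha'$ is first matched from $\beta$ to some $\beta'$ with $\alpha'\mwbsim^L_i\beta'$, and then the \emph{same} weak-step attack $\beta\wstep{a}\beta'$ is matched from $\gamma$ to $\gamma'$ with $\beta'\mwbsim^L_i\gamma'$; the induction hypothesis at level $i$ gives $\alpha'\mwbsim^L_i\gamma'$. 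The word case is identical with $w$ in place of $a$; the Parikh case needs the extra observation that the composition of two Parikh-equivalent responses is again Parikh-equivalent to the original attack. This is precisely the argument that fails for the short-long approximants $\mwbsim_i$, because Duplicator's response is a \emph{weak} step which cannot be re-used as a legal strong attack against $\gamma$ at the same level.

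For item (2), the key BPP-specific observation is a \emph{decomposition property}: any single step out of $\alpha\gamma$ modifies exactly one variable and therefore comes either from $\alpha$ or from $\gamma$. Iterating this, any path $\alpha\gamma=\delta_0\step{a_1}\delta_1\cdots\step{a_k}\delta_k=\alpha'\gamma'$ splits into two subsequences, yielding $\alpha\wstep{w_\alpha}\alpha'$ and $\gamma\wstep{w_\gamma}\gamma'$ whose action positions partition those of the combined word. For the short-long approximant I would case-split on whether Spoiler's single $\step{a}$ is in $\alpha$ or $\gamma$; in the first case apply the hypothesis $\alpha\mwbsim_{i+1}\beta$ to get $\beta\wstep{a}\beta'$ and then lift by the induction hypothesis for congruence at level $i$; in the second case Duplicator simply replays the $\gamma$-move and lifts using $\alpha\mwbsim_i\beta$ (which is weaker than $\mwbsim_{i+1}$). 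For the long-long/word/Parikh variants I would use the decomposition to let Duplicator run the $\alpha$-strategy for the $\alpha$-component, copy Spoiler's $\gamma$-moves on the $\gamma$-component, and finally interleave the two executions.

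The main technical obstacle is the last reassembly step in the word approximant case: Duplicator's witness $\beta\wstep{w_\alpha}\beta'$ contains its own $\tau$-padding, possibly of different length from Spoiler's, while we need to recover a sequence labelled by the \emph{exact} word $w$ (not merely by a shuffle of $w$). I would argue that since $w_\alpha$ and $w_\gamma$ occupy disjoint positions whose union is $\{1,\dots,|w|\}$, we can interleave the visible letters of $\beta$'s response with the copied visible letters from $\gamma$ in the position pattern dictated by $w$, absorbing the $\tau$-padding from both sides into the surrounding $\step{\tau}^*$ blocks of $\wstep{w}$; this produces a legitimate derivation $\beta\gamma\wstep{w}\beta'\gamma'$. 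The Parikh case is then easier because only the Parikh image of the response must agree with Spoiler's word, so concatenation $\beta\wstep{w_\alpha'}\beta'\wstep{w_\gamma}\beta'\gamma'$ already has the correct Parikh image $\mathcal{P}(w)$, and the long-long case reduces to the subcases $a=\tau$ (silent steps on both components close up into one $\wstep{\tau}$) or $a\neq\tau$ (the unique visible letter sits on whichever side produced it). The induction hypothesis for congruence at level $i$ then upgrades $\alpha'\mwbsim_i\beta'$ (from the chosen approximant) to $\alpha'\gamma'\mwbsim_i\beta'\gamma'$, completing the successor step.
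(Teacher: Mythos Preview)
Your proposal is correct and follows essentially the same approach as the paper: transitivity via ``use Duplicator's response as a fresh Spoiler attack in the second game'' (which works precisely because in the $L,P,W$ games both players use weak moves of the same type), and congruence via decomposing a move from $\alpha\gamma$ into its $\alpha$- and $\gamma$-components and letting Duplicator copy on the $\gamma$-part. The paper phrases item~(2) as a single global strategy argument (``Spoiler cannot distinguish $\alpha\gamma$ and $\beta\gamma$ in fewer rounds than $\alpha$ and $\beta$'') rather than an explicit transfinite induction, but the two formulations are equivalent; if anything, your treatment of the reassembly step for word approximants---interleaving the visible letters according to the original position pattern and absorbing the $\tau$-padding---is more careful than the paper's, which simply asserts that the combined response works without spelling out the interleaving.
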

    \begin{proof}
        1. Let $O \in \{L,P,W\}$. We show transitivity by induction: $\mwbsim^O_0=\Proc{V}\x\Proc{V}$ is trivially
        transitive. Assume $\mwbsim^O_i$ is transitive for $i\in\Ord$ and 1) $\alpha\mwbsim^O_{i+1}\beta$ and 2)
        $\beta\mwbsim^O_{i+1}\gamma$.
        We show that \V\ wins the $O$-approximants game that starts at $(i+1,\alpha,\gamma)$.
        %\V\ wins the approximants game for the pair $\alpha\mwbsim^O_{i+1}\gamma$ by playing a "copycat" strategy:
        Without loss of generality one can assume that \R\ moves $\alpha\wstep{u}\alpha'$. By 1) we
        know that in the game $\alpha$ vs.\ $\beta$ there is a valid response
        $\beta\wstep{v}\beta'$ such that $\alpha'\mwbsim^O_i\beta'$. Equally well if in the game
        $\beta$ vs.\ $\gamma$, \R\ moves $\beta\wstep{v}\beta'$ then by 2) there is a valid
        response $\gamma\wstep{w}\gamma'$ with $\beta'\mwbsim^O_i\gamma'$. By induction hypotheses
        we have $\alpha\mwbsim^O_i\gamma$, so by definition of $\mwbsim^O_{i+1}$ also
        $\alpha\mwbsim^O_{i+1}\gamma$. 

	For limit ordinals $l$ this goes analogously: for \R s
        attack from $\alpha$ there is a response from $\beta$ for all smaller ordinals $i$; for any such move there is a
        response from $\gamma$ to some process equivalent at level $i$. By assumption $\alpha\mwbsim^O_i\gamma$ and
        hence $\alpha\mwbsim^O_l\gamma$ by definition.
        Symmetry and reflexivity follow trivially from the definition.

        The second claim is a result of \V\ using a strategy that remembers which parts of the
        configurations $\alpha\gamma, \beta\gamma$ come from $\alpha, \beta$ and $\gamma$. Every
        move of \R\ from $\alpha\gamma$ (or $\beta\gamma$) can be split into two parts, one which originates 
        from $\alpha$ (or $\beta$) and the one which was performed from
        variables that come from $\gamma$. \V s response will be the combined responses for the first
        and the second part of \R s attack in the games $\alpha$ vs.\ $\beta$ and $\gamma$ vs. $\gamma$.
        In the second part \V\ simply copies \R s move and can therefore even preserve equality
        on the parts of the processes that derive from $\gamma$.
        This means \R\ cannot distinguish $\alpha\gamma$ and $\beta\gamma$ in fewer
        rounds than he can distinguish $\alpha$ and $\beta$.
        \qed
\end{proof}

    The first claim of the lemma does not hold for the short-long approximants $\mwbsim_i$  because \R\ and \V\ have
    different abilities to move. For a counter-example to their transitivity consider example below.
     \begin{example} The following rules describe a system with $X\mwbsim_1 Y \mwbsim_1 Z \not\mwbsim_1 X$:
        $$Y\step{\tau} X,\ 
        Y  \step{\tau} Z,\ 
        Y  \step{\tau} Y',\ 
        Y' \step{a} \varepsilon,\ 
        Y' \step{b} \varepsilon,\ 
        X  \step{a} X,\ 
        Z  \step{b} Z.$$
    \end{example}

    We will continue to show that for finite ordinals $i<\omega$, the approximants
    $\mwbsim_i,\mwbsim^L_i$ and $\mwbsim^P_i$ are decidable.
    For this we recall \emph{Presburger Arithmetic}, the first order logic of natural numbers with addition and
    equality. Syntactically, a Presburger Arithmetic formula is $True, False$, 
    a statement $t_1=t_2$ where the terms $t_1,t_2$ are sums of natural numbers or variables, any
    boolean combination of smaller formulae or a universally or existentially quantified formula.
    We write $F(x_1,x_2\dots x_k)$ for the formula $F$ in which the variables $x_1\dots x_k$ occur
    freely, i.e. not in the scope of a quantifier and interpret formulae over natural numbers and equality.
    A set $R\subseteq\N^k$ of $k$-tuples of natural numbers is said to be \emph{Presburger-definable} if there is a
    Presburger Arithmetic formula $\Phi_R(x_1x_2\dots x_k)$ that satisfies
    $$\Phi_R(x_1x_2\dots x_k)\equiv True \iff (x_1x_2\dots x_k)\in R$$
    
    An important property of Presburger Arithmetic is that it is decidable if a given a Presburger Arithmetic formula $\Phi$ without free variables
    is True. This implies that Presburger-definable sets are decidable.
    Moreover, the class of Presburger-definable sets coincides with the class of \emph{semi-linear}
    sets \cite{GS1966} which for our purposes means it is effectively closed under projection and intersection.
    We refer to \cite{GS1966} for the details on Presburger Arithmetic.

    Any relation $R$ over BP processes with $k$ variables is a subset of $\N^{2k}$.
    %In order to treat multisets $\mu\in\Proc{D}$ as tuples in $\N^D$ we assume some order on its elements
    %and write $i(a)$ for the position of $a\in D$ in that order.
    We now show that for finite $n$, the approximants $\mwbsim_n,\mwbsim_n^L$
    and $\mwbsim^P_n$ are effectively Presburger-definable and therefore decidable relations.
    We recall an important result from \cite{Esp1997}, Thm 3.3:

    %We write $\mathcal{P}(w)\in\Proc{\Alf}$ for the \emph{Parikh image} of word $w\in\Alf^*$:
    %the multiset over $\Alf$ that agrees with $w$ on the multiplicities of its contained symbols
    \begin{lemma}
        \label{lem:reach}
        For any BPP description, the set $Reach\subseteq \Proc{V}\x\Proc{\Alf}\x\Proc{V}$
        of triples $(\alpha,\mu,\beta)$ such that $\alpha\step{a_1}\alpha_1\step{a_2}\alpha_2\ldots\step{a_n}\beta$ for some 
        sequence $a_1a_2\ldots a_n\in\Alf^*$ with $\mathcal{P}(a_1a_2\ldots a_n)=\mu$
        is effectively Presburger-definable.
    \end{lemma}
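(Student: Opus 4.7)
The plan is to parametrize derivations by their Parikh image over transitions and express reachability as a Presburger formula in the coordinates $\alpha,\mu,\beta$ together with auxiliary variables counting transition uses.

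First I would associate to each transition rule $t\colon X_t\step{a_t}\alpha_t$ in $T$ a variable $x_t\in\N$ giving the number of times $t$ appears along the derivation. Both the overall marking effect and the Parikh image of the action word are linear in $\vec{x}=(x_t)_{t\in T}$: for every $Y\in V$ one must have $\beta(Y)=\alpha(Y)+\sum_{t\in T}x_t\bigl(\alpha_t(Y)-[X_t=Y]\bigr)$, and for every $a\in\Alf$ one must have $\mu(a)=\sum_{t:\,a_t=a}x_t$. These finitely many linear equalities translate directly into a Presburger formula $\Phi_{\text{eff}}(\alpha,\mu,\beta,\vec{x})$ over the given free variables.

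The nontrivial part is imposing a firability condition on the multiset $\vec{x}$: a solution of the linear system need not correspond to any legal firing order. Here I would invoke the characteristic enabling property of communication-free nets (equivalently, commutative context-free grammars): $\vec{x}$ is realizable as a firing sequence from $\alpha$ iff for every transition $t$ with $x_t>0$, the variable $X_t$ is reachable, in the digraph with edges $\{X_s\to Z : x_s>0,\ Z\in \mathrm{supp}(\alpha_s)\}$, from some variable in $\mathrm{supp}(\alpha)$. Since $T$ is finite, only $2^{|T|}$ supports $T'\subseteq T$ of $\vec{x}$ are possible, and for each fixed $T'$ both the digraph and the reachability condition are purely combinatorial data read off the process description. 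Firability therefore becomes a finite disjunction
\[
\bigvee_{T'\subseteq T}\Bigl(\bigwedge_{t\in T'} x_t\ge 1\ \wedge\ \bigwedge_{t\notin T'} x_t=0\ \wedge\ \psi_{T'}(\alpha)\Bigr),
\]
where $\psi_{T'}(\alpha)$ is a propositional combination of atoms of the form $\alpha(X)\ge 1$ asserting that the LHS of each $t\in T'$ is reachable in the $T'$-digraph from some variable present in $\alpha$.

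The main obstacle is justifying the enabling characterization itself: one must show that whenever $\Phi_{\text{eff}}$ and the connectivity condition both hold, an actual firing sequence exists. This is the content of Esparza's theorem and is proved by induction on $\sum_t x_t$, exploiting the commutativity of parallel composition---once produced, a variable remains available until consumed, so transitions can be reordered along any topological walk through the $T'$-digraph. Conjoining $\Phi_{\text{eff}}$ with the disjunction above and then existentially projecting away the auxiliary variables $\vec{x}$ yields the desired Presburger definition of $Reach$.
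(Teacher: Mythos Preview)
Your sketch is essentially correct and follows the standard route to Esparza's theorem: count transition occurrences, express both the marking change and the action Parikh image linearly, and encode fireability by a finite case split over supports together with a connectivity condition on the induced dependency graph. The induction on $\sum_t x_t$ you mention is indeed how one shows sufficiency of the state equation plus connectivity for communication-free nets.

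The paper, however, does not prove this lemma at all: it is stated as a direct import from \cite{Esp1997}, Theorem~3.3, and used as a black box. So your proposal is not so much a different approach as a reconstruction of the cited result. What you gain is self-containment; what the paper gains by citing is brevity and an exact reference for the delicate half of the argument (sufficiency of the connectivity condition), which you correctly flag as the ``main obstacle'' but only gesture at. If you intend this as a full proof rather than a pointer, you would need to spell out that induction more carefully---in particular, that from any marking satisfying the connectivity condition for support $T'$, some $t\in T'$ is immediately fireable and firing it preserves the hypotheses for the residual vector.
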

    From this we can conclude that the step and weak step relations $\step{a},\wstep{a}$
    are effectively Presburger-definable:
    The sets $S_1=\{a\}$, and $S_2=\{a\}\Proc{\{\tau\}}$ (in other words the Parikh images of $a\tau^*$) are easily seen to be Presburger-definable
    and $\step{a}$ and $\wstep{a}$ are expressible as the projections into the first and third component of
    $Reach\cap(\Proc{V}\x S_1\x \Proc{V})$ and $Reach\cap(\Proc{V}\x S_2\x\Proc{V})$ respectively.
    
    \begin{theorem}
        \label{thm:decidable-finite-apps}
     For a given $BP$ process description $B$ with $k$ variables the $n$-th approximants
     $\mwbsim_n,\mwbsim_n^L$ and $\mwbsim_n^P$ over $B$ are decidable for all finite $n$.
    \end{theorem}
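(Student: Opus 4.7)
My plan is to prove this by transfinite induction on $n$, showing the stronger statement that each of the approximants $\mwbsim_n, \mwbsim_n^L, \mwbsim_n^P$, viewed as a subset of $\N^{2k}$, is effectively Presburger-definable. Decidability of Presburger arithmetic then yields the claim. The proof strategy amounts to unfolding the defining game round-by-round, producing at each step a Presburger formula from the one at the previous step by adding a bounded-quantifier prefix.

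For the base case $n=0$, each approximant is $\Proc{V}\x\Proc{V} = \N^{2k}$, which is trivially Presburger-definable (the formula \emph{True}). For the inductive step, suppose $\Phi_n(\T x, \T y)$ defines $\mwbsim_n^*$ (for $* \in \{\varepsilon,L,P\}$). For the short-long approximant $\mwbsim_{n+1}$, the defining formula is
\begin{align*}
  \bigwedge_{a\in\Alf} \forall \T{x'}\, \bigl( (\T x \step{a} \T{x'}) \rightarrow \exists \T{y'}\, ((\T y \wstep{a} \T{y'}) \land \Phi_n(\T{x'},\T{y'})) \bigr) \ \land \ (\text{symmetric condition}),
\end{align*}
where the conjunction over $\Alf$ is finite. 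The predicates $\step{a}$ and $\wstep{a}$ are Presburger-definable: they are projections of $Reach$ from Lemma~\ref{lem:reach} intersected with the semilinear sets $\Proc{V}\x\{a\}\x\Proc{V}$ and $\Proc{V}\x\{a\}\Proc{\{\tau\}}\x\Proc{V}$ respectively. Closure of Presburger under boolean connectives and first-order quantification over $\N$ then gives a formula for $\mwbsim_{n+1}$. The long-long case $\mwbsim^L_{n+1}$ is identical, with $\step{a}$ replaced by $\wstep{a}$ in Spoiler's move.

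The Parikh case is the one that needs the most care, and I expect it to be the main technical point of the argument. Here Spoiler chooses an arbitrary word $w\in\Alf^*$ and moves along $\wstep{w}$, while Duplicator must respond along some $\wstep{w'}$ with $\mathcal{P}(w)=\mathcal{P}(w')$. The key observation is that what is really quantified over is the multiset $\mu = \mathcal{P}(w)\in\Proc{\Alf}$, not the word itself. Moreover, $\alpha\wstep{w}\alpha'$ holds for some $w$ with $\mathcal{P}(w)=\mu$ iff there is some $\mu'\in\Proc{\Alf}$ that agrees with $\mu$ on every non-$\tau$ component and has $\mu'(\tau)\geq\mu(\tau)$, such that $(\alpha,\mu',\alpha')\in Reach$; this accounts for the additional $\tau$-steps introduced by the weak-step operator. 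By Lemma~\ref{lem:reach} this relation is Presburger-definable as a subset of $\Proc{V}\x\Proc{\Alf}\x\Proc{V}$. Hence we can write $\mwbsim^P_{n+1}$ as
\begin{align*}
  \forall \mu\, \forall \T{x'}\, \bigl((\T x \Lra_\mu \T{x'}) \rightarrow \exists \T{y'}\, ((\T y \Lra_\mu \T{y'}) \land \Phi_n(\T{x'},\T{y'}))\bigr) \ \land \ (\text{symmetric}),
\end{align*}
where $\Lra_\mu$ abbreviates the Presburger-definable predicate just described.

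Since each step of the induction produces a Presburger formula from Presburger-definable ingredients using only finite boolean combinations and first-order quantification (over $\N$-valued variables encoding multisets), the construction is effective, and the resulting formulas can be tested for membership of any concrete pair $(\alpha,\beta)$. This yields decidability of all three approximants at every finite level.
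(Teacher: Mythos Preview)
Your argument is correct and essentially identical to the paper's: both show by induction on $n$ that the approximants are effectively Presburger-definable, expressing one game round as a first-order formula built from the $Reach$ predicate of Lemma~\ref{lem:reach}, with a finite conjunction over $\Alf$ for the short-long and long-long cases and a universal quantifier over action multisets for the Parikh case. One small slip in your Parikh treatment: since $\wstep{\tau}=\step{\tau}^*$ may contribute zero strong steps, the clause $\mu'(\tau)\geq\mu(\tau)$ in your claimed characterisation of $\Lra_\mu$ is not justified---the $\tau$-component of $\mu'$ should simply be unconstrained (the paper sidesteps this by quantifying directly over the Parikh image of the underlying strong-step sequence); this does not affect the structure of the proof.
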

    \begin{proof}
        It suffices to to show that $\mwbsim_n,\mwbsim_n^L$ and $\mwbsim_n^P$ are effectively
        Presburger-definable. By Lemma \ref{lem:reach} we can assume a Presburger Arithmetic formula $R\subseteq\N^V\x\N^{\Alf}\x\N^V$
        that expresses the set $Reach$ and formulae $Step_a,WStep_a\subseteq\N^V\x\N^{\Alf}\x\N^V$
        expressing the strong and weak $a$-step relations for all actions $a\in\Alf$.
        Now we can easily encode the refinement functions used in the approximants 
        and for any finite $n$ construct the Presburger Arithmetic formulae that express $\mwbsim_n,\mwbsim_n^L$ and
        $\mwbsim_n^P$ by induction:

        For $n=0$ we have $\mwbsim_0=\mwbsim_0^L=\mwbsim^P_0=\N^{2k}$ trivially definable as $\Psi_0(\alpha,\beta) = True$.
        
For
        $\mwbsim_{i+1}$ let $\Psi_{i+1}(\alpha,\beta) \iff \bigwedge_{a\in\Alf}\ ($
        \begin{align*}
            &(\forall\alpha'\in\N^{V}\  Step_a(\alpha,  \alpha') \implies  \exists\beta'\in\N^{V}\ 
                    WStep_a(\beta,  \beta') \land \Psi_i(\alpha',\beta') )\\
            \land
            &(\forall\beta'\in\N^{V}\  Step_a(\beta,  \beta') \implies  \exists\alpha'\in\N^{V}\ 
                    WStep_a(\alpha,  \alpha') \land \Psi_i(\alpha',\beta') ))&
        \end{align*}
        Similarly, for $\mwbsim^L_{i+1}$ let $\Psi_{i+1}(\alpha,\beta)$ as above but replace $Step_a$
        by $WStep_a$.
        For $\mwbsim^P_{i+1}$ let $\Psi_{i+1}(\alpha,\beta) \iff \forall \mu\in \N^{\Alf}\ ($
        \begin{align*}
          &(\forall\alpha'\in\N^{V}\ R(\alpha, \mu, \alpha') \implies  \exists\beta'\in\N^{V}\ 
                    R(\beta, \mu, \beta') \land \Psi_i(\alpha',\beta') )&\\
                    \land
                    &(\forall\beta'\in\N^{V}\  R(\beta, \mu, \beta') \implies  \exists\alpha'\in\N^{V}\ 
                    R(\alpha, \mu, \alpha') \land \Psi_i(\alpha',\beta') ))
        \end{align*}
        \qed
\end{proof}
    It is worth mentioning that word approximants $\mwbsim^W_n$ are not decidable at finite levels: for systems
    without silent actions the very first approximant $\mwbsim_1^W$ coincides with \emph{trace equivalence}, which has
    been shown to be undecidable for BPP by Hirshfeld \cite{Hir1993}.

\section{Applications}
    We now use the approximation approach to show that two subclasses of BPP previously known in the
    literature have decidable weak bisimilarity. In particular, we show this result in Section
    \ref{sec:app:sti} for the class introduced in \cite{Sti2001} by proving weak bisimilarity
    finitely approximable for the long-long approximants $\mwbsim^L$ and for the subclass introduced
    in \cite{Str1998} we show finite approximability for Parikh approximants $\mwbsim^P$ in Section
    \ref{sec:app:str}. In both cases we know by Theorem \ref{thm:decidable-finite-apps} that at
    finite levels the approximants are decidable equivalences and hence showing their convergence at
    level $\omega$ suffices to get a decision procedure.

    \begin{proposition}\label{middle_lemma}
     The following states some useful facts that are easily verified.
        \begin{enumerate}
            \item $\alpha\mwbsim\beta$ implies $|\alpha| = |\beta|$
            \item If $\alpha\wstep{}\beta\wstep{}\alpha'$ and $\alpha\mwbsim\alpha'$ then $\alpha\mwbsim\beta$.
            \item If $\alpha\wstep{}\alpha\beta$ and $\beta$ has norm $0$ then $\alpha\mwbsim\alpha\beta$
        \end{enumerate}
    \end{proposition}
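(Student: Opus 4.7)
For part 1, I would chase a shortest norm-witnessing path through the bisimulation. Given $\alpha\mwbsim\beta$ with $|\alpha|=n<\infty$, fix a word $w\in\Alf^*$ of length $n$ with $\alpha\wstep{w}\delta$ for a deadlock $\delta$. Unfolding $\wstep{w}$ into its underlying strong-step decomposition and iteratively applying the bisimulation condition of Definition~\ref{def:wbis} yields $\beta\wstep{w}\delta'$ with $\delta\mwbsim\delta'$. Since $\delta$ admits no non-silent step, neither can $\delta'$ (any such step would demand a match from $\delta$), so $\delta'$ is a deadlock and $|\beta|\le|\alpha|$. Symmetry closes the finite case, and $|\alpha|=\infty$ reduces to the same type of contradiction.

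For part 2, I propose to verify that $R=\mwbsim\cup\{(\alpha,\beta),(\beta,\alpha)\}$ is a weak bisimulation; the only non-trivial pair is $(\alpha,\beta)$. Against an attack $\alpha\step{a}\alpha_1$, I use $\alpha\mwbsim\alpha'$ to obtain $\alpha'\wstep{a}\alpha_1'$ with $\alpha_1\mwbsim\alpha_1'$, then prepend the silent path $\beta\wstep{}\alpha'$ to assemble Duplicator's reply $\beta\wstep{a}\alpha_1'$, landing in $\mwbsim\subseteq R$. Against an attack $\beta\step{a}\beta_1$, Duplicator chains $\alpha\wstep{}\beta\step{a}\beta_1$ to answer with the same $\beta_1$, matched by the reflexive pair in $\mwbsim$. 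The pair $(\beta,\alpha)$ is handled symmetrically.

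For part 3, I would reduce to part 2 via an auxiliary lemma. Using $|\beta|=0$, fix a deadlock $\delta$ with $\beta\wstep{}\delta$; the multiset-extension of BPP transitions yields the sandwich $\alpha\wstep{}\alpha\beta\wstep{}\alpha\delta$. The auxiliary claim is $\alpha\mwbsim\alpha\delta$, for which I would exhibit the symmetric closure of $\{(\mu,\mu\delta'):\delta\wstep{}\delta'\}\cup\mwbsim$ as a weak bisimulation: attacks from the $\mu$-component are mirrored symmetrically, while attacks originating in a $\delta'$-component are silent (since no descendant of a deadlock can initiate non-silent observable behaviour) and are answered by Duplicator idling, preserving the invariant form of the matched pair. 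Once $\alpha\mwbsim\alpha\delta$ is in hand, part 2 applied to the sandwich above delivers $\alpha\mwbsim\alpha\beta$. The main obstacle lies in the auxiliary claim: one has to use the deadlock hypothesis to rule out the pathological case of a silent $\delta$-descendant that carries non-silent outgoing transitions, so that Duplicator's do-nothing response remains legal throughout the game.
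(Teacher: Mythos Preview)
The paper does not supply a proof for this proposition; it simply labels the three claims as ``easily verified''. So there is nothing to compare against, and your write-up is essentially what the authors left to the reader. Parts 1 and 2 are fine as you have them.

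For part 3, your route via part 2 and the auxiliary claim $\alpha\mwbsim\alpha\delta$ is the natural one, and the ``obstacle'' you flag dissolves once the definition of deadlock is read as intended. In the weak setting a deadlock should be understood as a process admitting no non-silent \emph{weak} step (equivalently, $\delta\mwbsim\eps$). Under that reading, every silent successor $\delta'$ of a deadlock $\delta$ is again a deadlock: if $\delta'\wstep{a}\gamma$ for some $a\neq\tau$ then $\delta\wstep{}\delta'\wstep{a}\gamma$, contradicting that $\delta$ is a deadlock. Hence the problematic case never arises, your candidate relation (the symmetric closure of $\{(\mu,\mu\delta'):\delta\wstep{}\delta'\}\cup{\mwbsim}$) is a weak bisimulation, and the argument goes through without further work.

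If instead one reads ``deadlock'' literally as ``no non-silent \emph{strong} step'', then your worry is justified, but in that case the proposition itself fails. Take $A\step{\tau}AB$, $A\step{b}\eps$, $B\step{\tau}C$, $C\step{a}C$: here $B$ has no non-silent strong step so $|B|=0$, and $A\wstep{}AB$, yet $A\not\mwbsim AB$ because after $A\step{b}\eps$ every reply from $AB$ leaves a $B$- or $C$-residue that can still perform $\wstep{a}$. So the difficulty you identify is definitional rather than a gap in your strategy; adopt the weak reading and drop the caveat.
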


    \begin{definition}
    Let $O\in \{L,P,W\}$ and $\alpha,\beta\in \Proc{V}$ such that $\alpha\mwbsim_{\omega}^O \beta.$ For a given \R\ move
    from $\alpha$ to $\alpha'$ there is a sequence $B=\beta_1',\beta_2',\beta_3'\ldots$ of \V\ responses such that for
    all $i\in N$ holds $\alpha' \mwbsim_{i}^O \beta_i'.$ We call $B$ a \emph{family of responses}.
    \end{definition}
    Observe that the sequence is not unique, for example if you substitute $\beta_i$ by $\beta_j$ for any $j>i$ then you obtain another family of responses.  
    %Moreover if $\alpha \not\mwbsim_{\omega+1}^O \beta$ then a sequence of responses can not stabilize.
    By Dickson's Lemma we can assume that a family of responses is non-decreasing with respect to
    multiset inclusion: $\beta_i\sqsubseteq\beta_{i+1}$ for every $i\in\N$.
    
\subsection{Normed Processes with Pure Generators}
\label{sec:app:sti}
    Write $\alpha\step{}_0\beta$ for silent and norm-preserving steps between processes
    $\alpha,\beta\in\Proc{V}$: $\alpha\step{}_0\beta$ iff $\alpha\step{\tau}\beta$ and
    $|\alpha|=|\beta|$. Let $\wstep{}_0$ be the transitive and reflexive closure of $\step{}_0$.
    For variables $X,Y$ such that $X\Lra_0 Y\Lra_0 X$ we have $X\mwbsim Y$ by Claim 2) Proposition \ref{middle_lemma}. We say
    $X$ is \emph{redundant} because of $Y$ or vice versa. One can easily detect redundant variables and therefore we can
    assume that they have already been unified. That is, we can assume wlog. that our process description does not
    contain redundant variables.
    This allows us to linearly order the set $V$ of variables such that if $X\wstep{}_0Y\alpha$ then $X>Y$. Let's fix the
    notation $X_1>X_2>\ldots >X_k$.

    A \emph{generator} is a variable $X$ that allows a sequence $X\wstep{}_0X\alpha$ for some $\alpha\in\Proc{V}$, in which case
    we say $X$ \emph{generates} $\alpha$. Call a generator $X$ \emph{pure} if $X\wstep{}_0\alpha$ implies that
    $\alpha=\alpha'X$: Pure generators cannot vanish silently.

    Stirling shows decidability of weak bisimilarity for normed processes with only pure generators
    using a tableaux approach \cite{Sti2001}. One motivation for this subclass is
    that it still allows for infinite branching and that ordinary ($\mwbsim_i$) approximants do not
    converge at level $\omega$. In this section we show that long-long ($\mwbsim^L_i$) approximants
    in fact stabilize at level $\omega$ and thus provide the missing negative semidecision prodecure
    to conclude decidability.

    \begin{lemma}\label{lem_finitely_classes}
        Let $\alpha$ be a normed process of a BPP description without redundant variables in which every generator is pure.
        $Succ=\{\alpha'| \alpha\wstep{}_0\alpha'\}$ can be partitioned into finitely many
        equivalence classes with respect to weak bisimilarity.
    \end{lemma}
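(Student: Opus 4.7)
My plan is to reduce the statement to a single starting variable via the congruence of $\mwbsim$, and then induct on the variable ordering $X_1 > X_2 > \cdots > X_k$. For the reduction: any derivation $\alpha \wstep{}_0 \gamma$ from $\alpha = X_{i_1} \cdots X_{i_m}$ decomposes into independent sub-derivations $X_{i_j} \wstep{}_0 \gamma_j$ with $\gamma = \gamma_1 \cdots \gamma_m$, by the parallel semantics of BPP. Since weak bisimilarity is a congruence under multiset composition (the analogous claim for approximants is Lemma~\ref{lem:congruence}(2); the corresponding statement for $\mwbsim$ itself is standard and already implicitly relied on in the paper), the finiteness of $Succ(X_{i_j})/\mwbsim$ for each $j$ implies the finiteness of $Succ(\alpha)/\mwbsim$. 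Hence it suffices to prove the lemma for $\alpha = X_i$ a single variable.

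Now I would induct on $i$, assuming the claim for all $X_j$ with $j > i$. If $X_i$ is a generator, then by assumption it is pure, so every $\gamma \in Succ(X_i)$ has the form $X_i\delta$ with $|\delta| = 0$ (since the system is normed and $\wstep{}_0$ preserves norm). Proposition~\ref{middle_lemma}(3) then gives $X_i \mwbsim X_i\delta = \gamma$, so $Succ(X_i)$ is contained in a single $\mwbsim$-class. If $X_i$ is not a generator, then any $\gamma \in Succ(X_i)$ distinct from $X_i$ arises via some first productive step $X_i \step{}_0 \alpha_1$ with $\alpha_1 \neq X_i$. Moreover $X_i \notin \alpha_1$: otherwise $\alpha_1 = X_i\delta$ with $\delta \neq \varepsilon$, witnessing that $X_i$ is a generator. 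By the defining property of the linear order on $V$, all variables in $\alpha_1$ lie in $\{X_{i+1}, \ldots, X_k\}$, and since $\alpha_1$ is the right-hand side of one of finitely many BPP rules, there are only finitely many choices for $\alpha_1$.

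The remaining derivation $\alpha_1 \wstep{}_0 \gamma$ decomposes as $\gamma = \prod_{X \in \alpha_1} \gamma_X$ with each $\gamma_X \in Succ(X)$ and $X > X_i$. By the induction hypothesis each $Succ(X)$ has only finitely many $\mwbsim$-classes, and by congruence the class of $\gamma$ is determined by the multiset of classes of its factors. So each first-step rule yields only finitely many classes, and the union over all such rules together with the singleton $\{X_i\}$ is still finite. I expect the main technical obstacle to be verifying the structural claim in the non-generator case --- that any productive first step must strictly descend the ordering on $V$ --- which relies tightly on the interplay between the definitions of generator and of the linear order on $V$, together with the normedness of the system. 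The pure-generator case and the reduction via congruence are, by comparison, routine applications of Proposition~\ref{middle_lemma}(3) and the commutative-monoid structure of BPP.
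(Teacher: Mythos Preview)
Your proof is correct and uses the same core ingredients as the paper's argument: the reduction to a single starting variable via the parallel structure of BPP, induction along the variable ordering $X_1>\cdots>X_k$, and Proposition~\ref{middle_lemma}(3) to collapse the contribution of pure generators. The organization differs slightly. The paper first globally strips out generating moves, passing to the subset $Succ'\subseteq Succ$ of configurations reachable without any generating step, argues (via purity and Proposition~\ref{middle_lemma}(3)) that $Succ'$ meets every $\mwbsim$-class of $Succ$, and then shows $Succ'$ is \emph{literally finite} by the same induction you describe. You instead keep the generating moves but case-split inside the induction: for a pure generator $X_i$ you observe directly that $Succ(X_i)$ lies in a single $\mwbsim$-class, and only invoke the descending-first-step argument in the non-generator case. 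Your route avoids the paper's somewhat terse justification that $Succ'$ already represents all classes of $Succ$, at the cost of not obtaining the stronger intermediate fact that the non-generating reachability set is finite as a set. One small slip: when you write ``each $\gamma_X\in Succ(X)$ and $X>X_i$'' you mean $X<X_i$ in the fixed order (equivalently, the index of $X$ exceeds $i$), which is what you need for the induction hypothesis to apply.
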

    \begin{proof}
        The third claim of Proposition \ref{middle_lemma} allows us to restrict
        ourselves to the subset $Succ'$ of $Succ$ of configurations which are obtained without use of generating moves
        because it has the same number equivalence classes as generators cannot vanish along
        $\wstep{}_0$ moves. Our goal is to show that $Succ'$ is finite which immediately implies the
        claim of the lemma. 
      
        Every derivation of $\alpha$ is a sum of derivations from variables belonging to $\alpha$.
        If we prove that in silent norm preserving steps without generating moves, we can only derive finitely
	many configurations from each
        variable, then we will also prove that $Succ'$ is finite.
        We will show that this is indeed the case for all variables by induction over the assumed order $<$.
        From the smallest variable $X_k$ using silent norm
        preserving steps \emph{without generating} we can derive only two configurations, namely $X_k$ or $\eps$.

        Assume $c>0$ bounds the number of possible silent norm preserving derivations from any
        variable in $X_i \ldots X_k$ and consider the variable $X_{i-1}$.
        In case $X_{i-1}$ is a deadlock variable, i.e. $X_{i-1}\step{\tau}X_{i-1}$ is the only applicable
        rule, we can trivially bound the number of its derivations by $1\le c$.
        Otherwise, because we forbid generating moves we must have that any rule $X_{i-1}\step{\tau}_0\alpha$ produces a multiset 
        $\alpha\in \Proc{\{X_i\ldots X_k\}}$.
        The fact that there are only finitely many rules that rewrite variable 
        $X_{i-1}$ implies that we can bound the number of its silent norm preserving derivations by
	    $$d\ \cdot\ c^l+1,$$
        where $d$ is the number of rules for $X_{i-1}$ and $l$ is the maximal size of any right hand
        side of a rule rewriting $X_{i-1}$.
    \qed
\end{proof}

    \begin{theorem}\label{thm:Sti2001}
        $\mwbsim\ =\ \mwbsim^L_\omega$ for normed BPP where each generator is pure.
    \end{theorem}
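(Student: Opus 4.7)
The direction $\mwbsim \subseteq \mwbsim_\omega^L$ is immediate from Theorem~\ref{thm:convergence}. For the other inclusion my plan is to show that $\mwbsim_\omega^L$ is itself a weak bisimulation. So take $\alpha \mwbsim_\omega^L \beta$ and a Spoiler attack $\alpha \step{a} \alpha'$ (an attack from the right is handled symmetrically). Definition~\ref{def:approximants} provides a family of responses $(\beta_i')_{i\in\N}$ with $\beta \wstep{a} \beta_i'$ and $\alpha' \mwbsim_i^L \beta_i'$; applying Dickson's lemma I pass to a $\sqsubseteq$-non-decreasing subsequence and re-index so that $\beta_1' \sqsubseteq \beta_2' \sqsubseteq \dots$ while keeping $\alpha' \mwbsim_i^L \beta_i'$ for every $i$.

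A preliminary observation is that $\mwbsim_\omega^L$ preserves norm: if $|\alpha| < |\beta|$, Spoiler can follow a shortest weak path to deadlock on the $\alpha$-side in $|\alpha|$ rounds, after which Duplicator is stuck because the $\beta$-side still has positive norm. Consequently $|\beta_i'| = |\alpha'|$ for every $i$, and writing $\beta_i' = \beta_1'\gamma_i$ forces $|\gamma_i| = 0$.

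Put $\beta^* := \beta_1'$. The plan is to show $\alpha' \mwbsim_\omega^L \beta^*$; together with $\beta \wstep{a} \beta^*$ this supplies the required Duplicator response. I reduce this to the intermediate claim that $\beta^* \mwbsim \beta_i'$ for every $i$: Lemma~\ref{lemma:approximants} then gives $\beta^* \mwbsim_i^L \beta_i'$, transitivity of $\mwbsim_i^L$ (Lemma~\ref{lem:congruence}) combined with $\alpha' \mwbsim_i^L \beta_i'$ yields $\alpha' \mwbsim_i^L \beta^*$ for all $i$, and hence $\alpha' \mwbsim_\omega^L \beta^*$ by definition.

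The main obstacle is the intermediate claim $\beta^* \mwbsim \beta^* \gamma_i$. I plan to obtain it from Claim~3 of Proposition~\ref{middle_lemma}, which reduces the task to exhibiting $\beta^* \wstep{} \beta^* \gamma_i$, i.e.\ to showing that the norm-$0$ surplus $\gamma_i$ can be silently regenerated from $\beta^*$. This is where purity is essential. In the derivation $\beta \wstep{a} \beta_i' = \beta^*\gamma_i$ the content $\gamma_i$ is produced by silent generating moves of pure generators; since pure generators cannot vanish along norm-preserving silent reductions and $\gamma_i$ has norm $0$, the generators responsible for $\gamma_i$ are not inside $\gamma_i$ but survive in $\beta^*$, and can replay the same generating moves from $\beta^*$ to yield $\beta^*\gamma_i$. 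Formalising this replay step, by extracting the relevant sub-derivation and tracking which variable occurrences persist, is the technical heart of the proof and relies on exactly the structural analysis of pure generators used in Lemma~\ref{lem_finitely_classes}.
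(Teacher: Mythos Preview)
Your proof has a genuine gap at the intermediate claim $\beta^*\mwbsim\beta_i'$ and in particular at the ``replay'' step $\beta^*\wstep{}\beta^*\gamma_i$. Purity only constrains \emph{silent norm-preserving} reductions: a pure generator cannot disappear along $\wstep{}_0$, but it can perfectly well be consumed by the visible step (or by a norm-decreasing $\tau$-step) inside the derivation $\beta\wstep{a}\beta_i'$. Concretely, take $G\step{\tau}GA$, $G\step{a}B$, $A\step{\tau}\eps$, $A\step{c}\eps$, $B\step{b}\eps$; here $G$ is a pure generator, $|A|=0$, and the responses $\beta\wstep{a}BA^k$ give an increasing family with $\beta^*=B$ and $\gamma_k=A^k$. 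But $B$ has no silent move at all, so $B\not\wstep{}BA^k$, and indeed $B\not\mwbsim BA$ since $BA$ can perform $c$. Thus not only your argument for the intermediate claim, but the claim itself, fails; the sentence ``the generators responsible for $\gamma_i$ \dots\ survive in $\beta^*$'' confuses $\beta_i'$ (where survival might be argued) with $\beta_1'$ (which arose from a different derivation and need not contain any generator). The appeal to Lemma~\ref{lem_finitely_classes} does not help here, because that lemma analyses $\wstep{}_0$-successors of a fixed process and says nothing about what survives a $\wstep{a}$-move.

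The paper's proof avoids this pitfall by working in the opposite direction: rather than trying to \emph{grow} $\beta_1'$ into $\beta_i'$, it observes that one can always \emph{shrink} $\beta_j'\wstep{}_0\beta_i'$ by silently deleting the norm-$0$ surplus, and then asks which element of $\{\alpha''\mid\alpha'\wstep{}_0\alpha''\}$ is \V's best reply to such a move. Lemma~\ref{lem_finitely_classes} guarantees only finitely many bisimilarity classes among these replies, so a pigeonhole argument yields one class $\gamma$ that serves infinitely many $\beta_i'$; from this one extracts $\beta_i'\mwbsim^L_{j-1}\alpha'$ for arbitrarily large $j$, contradicting optimality of \R's first move. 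The crucial difference is that purity and Lemma~\ref{lem_finitely_classes} are applied on the $\alpha'$-side and only along $\wstep{}_0$, never across the visible step.
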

    \begin{proof}
        Assume towards a contradiction that we have % two processes $\alpha$ and $\beta$ with
        $\alpha\mwbsim^L_\omega\beta\not\mwbsim^L_{\omega+1}\alpha$. Wlog.\ assume an
        optimal\footnote{a move prescribed by an optimal winning strategy: one that guarantees a win for
        Spoiler in the fewest number of rounds and thus properly decreases the approximation index
        in each round.}
        initial move $\alpha\wstep{a}\alpha'$ for \R\ in the game $\alpha$ vs.\ $\beta$ and a family
        $B=\beta'_0,\beta'_1,\ldots$ of responses which is strictly increasing wrt.\ multiset
        inclusion.

        By Lemma \ref{lem_finitely_classes}, the set $Succ = \{\alpha''|\alpha'\wstep{}_0\alpha''\}$
        of configurations reachable from $\alpha'$ in silent and norm-preserving steps
        contains finitely many bisimilarity classes. Let the set $Succ'$ be a finite set of representants of those
        classes in $Succ$.
        This allows us to define a function $f:B\to Succ'$ that maps $\beta_i'\in B$ to an element in $Succ'$ that maximises
        their approximation index: $\beta_i'\mwbsim^L_k f(\beta'_i)$ and $\forall \gamma\in Succ'\  \beta_i'\mwbsim^L_l\gamma \implies k\ge l$.
        This function is well defined because set $Succ'$ is finite.
        Now consider an infinite subsequence $B(\gamma)$ of $B$ that contains all elements which $f$ maps to the
        configuration $\gamma\in Succ'$. By the pigeon hole principle such a subsequence exists.

        Take two different elements $\beta'_i\sqsubset \beta'_j$ of $B(\gamma)$ for arbitrary large
        $i,j$. We have
        1) $\beta'_i \mwbsim^L_i \gamma \mwbsim^L_j\beta'_j$ because $\alpha'\in Succ'$ and 2)
        $\beta'_i$ and $\beta'_j$ have the same norm.
        To see why the second obervation is true note that
        $|\alpha|\neq|\beta|$ implies $\alpha\not\mwbsim^L_{\min{\{|\alpha|,|\beta|\}}}\beta$
        as \R\ only needs to decrease the smaller process to a deadlock which cannot
        be mimiked by \V\ on the other process because the norms differ.
        We know $\beta'_i\mwbsim^L_i\alpha'\mwbsim^L_j\beta'_j$, so $|\beta'_i|=|\alpha'|=|\beta'_j|$
        as otherwise $i$ and $j$ would be bounded by $|\alpha'|$.

        Consider the game on $\alpha'$ vs.\ $\beta'_j$ and a silent, norm-preserving move $\beta'_j\wstep{}_0\beta'_i$
        made by \R, which must be possible due to observation 2) and the fact that $\beta'_i$ is a subset of $\beta'_j$.
        Now by definition of the subsequence $B(\gamma)$ we deduce that $\alpha'\wstep{}_0\gamma$ is
        an optimal response for \V. Therefore by 1), we know that $\beta'_i \mwbsim^L_{j-1}\gamma$
        so $\beta'_i\mwbsim^L_{j-1}\beta'_j$ by transitivity and the fact that
        $\beta'_j \mwbsim^L_{j-1}\gamma$. But now we have $\beta'_i \mwbsim^L_{j-1}\alpha'$ for
        arbitrarily high $j$ and therefore $\beta'_i\mwbsim^L_\omega\alpha'$ which contradicts the
        optimality of \R's very first move.
    \qed
\end{proof}

\subsection{Unnormed Processes over one visible Action}
\label{sec:app:str}
    %TODO: discuss this class: she showed that for this class hirshfelds conjecture holds,
    % the class is essentially totally normed (all positive, finite norm), the infinite norm case is made trivial by the
    % unary assumption.
    Consider the subclass of BPP processes that satisfy both
    \begin{enumerate}
        \item There is only one visible action label, $\Alf = \{\tau, a\}$ and
        \item Every variable has positive or infinite norm.
    \end{enumerate}
    This class has been introduced in \cite{Str1998}, where it was shown that for processes of this kind,
    Hirshfelds conjecture holds: $\mwbsim\ =\ \mwbsim^L_{\omega*2}$. 
    Note that this class is not a subclass of the \emph{totally normed} systems \cite{Hir1997} as it allows
    for variables with infinite norm.  We show that this class has decidable weak bisimilarity by showing that
    Parikh-approximants converge at level $\omega$.

    \begin{theorem}
        $\mwbsim\ =\ \mwbsim^P_\omega$ for the subclass of BPP processes with a single visible
        action and no variables with norm $0$.
    \end{theorem}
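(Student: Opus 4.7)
The plan is to apply the approximation approach. By Theorem~\ref{thm:convergence} we already have $\mwbsim \subseteq \mwbsim^P_\omega$, so the content lies in the converse inclusion. I would prove $\mwbsim^P_\omega \subseteq \mwbsim$ by showing that $\mwbsim^P_\omega$ is itself a post-fixpoint of the Parikh refinement function, i.e.\ that for every $\alpha \mwbsim^P_\omega \beta$ and every Spoiler Parikh attack $\alpha \wstep{w} \alpha'$ there exists a Duplicator response $\beta \wstep{w'} \beta'$ with $\mathcal{P}(w) = \mathcal{P}(w')$ and $\alpha' \mwbsim^P_\omega \beta'$. Since a single strong step is a special case of a Parikh move, any such Parikh post-fixpoint is in particular a weak bisimulation, and the desired inclusion follows by Knaster--Tarski.

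Fix such $\alpha \mwbsim^P_\omega \beta$ and Spoiler attack $\alpha \wstep{w} \alpha'$. Because $\Alf = \{\tau,a\}$ and silent prefixes and suffixes are absorbed into weak steps, this attack is determined by the number $j$ of $a$'s in $w$, and the premise supplies for each finite $n$ a response $\beta \wstep{a^j} \beta'_n$ with $\alpha' \mwbsim^P_n \beta'_n$. By Dickson's lemma, after passing to a subsequence I may assume $\beta'_1 \sqsubseteq \beta'_2 \sqsubseteq \cdots$, in the same spirit as the proof of Theorem~\ref{thm:Sti2001}. If this subsequence is eventually constant with value $\beta^*$, then $\alpha' \mwbsim^P_n \beta^*$ cofinally in $n$, and downward closure of $\mwbsim^P_n$ in $n$ yields $\alpha' \mwbsim^P_\omega \beta^*$, so $\beta^*$ is the required response. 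Otherwise the sequence is strictly increasing; here the finite-norm case can be dismissed quickly: if $|\alpha'| < \infty$, Spoiler's attack $\alpha' \wstep{a^{|\alpha'|}} \gamma$ into a deadlock $\gamma$ forces $|\beta'_n| = |\alpha'|$ at sufficiently high levels, and since no variable has norm zero this caps the multiset size of the $\beta'_n$, contradicting the strict growth. Hence $|\alpha'| = \infty$.

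The main obstacle is the remaining strictly-increasing, infinite-norm case: write $\beta'_n = \beta'_1 \delta_n$ with the $\delta_n$ non-empty and non-decreasing. The strategy I would pursue is to show that $\beta'_1$ itself is already a valid level-$\omega$ response, i.e.\ $\alpha' \mwbsim^P_\omega \beta'_1$. The main ingredients are: the congruence and transitivity of the Parikh approximants (Lemma~\ref{lem:congruence}), which together with $\alpha' \mwbsim^P_n \beta'_1 \delta_n$ and $\alpha' \mwbsim^P_1 \beta'_1$ yield pumping-type relations of the shape $\alpha' \mwbsim^P_m \alpha' \delta_n$; the single-visible-action structure, in which each Spoiler attack is determined by one natural number $k$, so that any $\wstep{a^k}$-move on $\beta'_1 \delta_n$ decomposes into moves on the two parallel components and can be controlled by pigeon-hole on $k$; and Esparza's Presburger-definable reachability (Lemma~\ref{lem:reach}), which keeps the resulting sub-responses semi-linearly controlled. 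The intuitive point is that the added power of Parikh attacks over long-long ones --- compressing an entire would-be $\omega$-suffix of long-long rounds into a single Parikh round --- is exactly what brings the convergence bound down from Str\'{\i}brn{\'a}'s $\mwbsim^L_{\omega \cdot 2}$ to $\mwbsim^P_\omega$.
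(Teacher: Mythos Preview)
Your outline is structurally sound up to and including the finite-norm case, but the part you identify as the ``main obstacle''---the infinite-norm case---is in fact trivial, and you have missed the one-line observation that dispatches it. With a single visible action, \emph{any two configurations of infinite norm are already weakly bisimilar}: the relation $\{(\gamma,\delta):|\gamma|=|\delta|=\infty\}$ is a weak bisimulation, because infinite norm is preserved by every step (if $|\gamma|=\infty$ and $\gamma\step{x}\gamma'$ then $|\gamma'|=\infty$, else $|\gamma|\le 1+|\gamma'|<\infty$) and every infinite-norm process can perform a weak $a$-step to another infinite-norm process. Hence if $|\alpha'|=\infty$, then already $\beta'_1$ has infinite norm (otherwise \R\ separates them at level~$1$ by playing $a^{|\beta'_1|+1}$ from $\alpha'$), so $\alpha'\mwbsim\beta'_1$ and thus $\alpha'\mwbsim^P_\omega\beta'_1$; you are done immediately. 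This is exactly the first observation in the paper's proof.

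Consequently your proposed machinery for the infinite-norm case---pumping relations $\alpha'\mwbsim^P_m\alpha'\delta_n$, decomposition of Parikh attacks across parallel components, and an appeal to Lemma~\ref{lem:reach}---is unnecessary, and as written it is not a proof but a sketch of a strategy whose success is far from clear. The paper's argument is much shorter: once the infinite-norm case is eliminated, one simply picks $2<i<j$ with $\beta'_i\sqsubsetneq\beta'_j$; since every variable has positive norm and both processes have finite norm, $|\beta'_i|<|\beta'_j|$, whence $\beta'_i\not\mwbsim^P_2\beta'_j$. But transitivity of $\mwbsim^P_i$ (Lemma~\ref{lem:congruence}) together with $\beta'_i\mwbsim^P_i\alpha'\mwbsim^P_i\beta'_j$ gives $\beta'_i\mwbsim^P_i\beta'_j$ and hence $\beta'_i\mwbsim^P_2\beta'_j$, a contradiction. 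Your finite-norm argument is a mild variant of this (bounding multiset size instead of using transitivity directly), and is fine; the genuine gap is only the missing infinite-norm observation.
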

    \begin{proof}
        First observe that 1) implies that all configurations with infinite norm must be equivalent and due
        to norm preservation cannot be equivalent to any configuration of finite norm.
        The second restriction guarantees that there are only finitely many different configurations for any given
        finite norm.
        2) Whenever two processes have different but finite norms, they are certainly not related by $\mwbsim^P_2$
        as \R\ may rewrite the smaller process to a deadlock in one long step without allowing his
        opponent to do the same on the other process.

        Assume towards a contradiction that $\alpha\mwbsim^P_\omega\beta\not\mwbsim^P_{\omega+1} \alpha$.
        So for an optimal initial move $\alpha\wstep{w}\alpha'$ for \R\ there is a family of responses from $\beta$.
	This sequence cannot converge as otherwise our
        assumption $\beta\not\mwbsim^P_{\omega+1} \alpha$ would be false. By the pidgin hole principle, there
        must be at least one variable $X$ that grows indefinitely along this sequence. Take two
        elements $\beta'_i\sqsubset\beta'_j,\ 2<i<j$ from this sequence such that $X$ occurs more often in $\beta'_j$.
        By observation 2) and the fact that $\beta'_i$ and $\beta'_j$ have different norms we know
        that $\beta'_i\not\mwbsim^P_2\beta'_j$.
        Because $\beta'_i\mwbsim^P_i\alpha'\mwbsim^P_j\beta'_j$ and $i<j$ holds
        $\beta'_i\mwbsim^P_i\alpha'\mwbsim^P_i\beta'_j$. From this and the transitivity of
        $\mwbsim^P_i $ we conclude that $\beta'_i \mwbsim^P_i\beta'_j$ and because $2<i$ also
        $\beta'_i\mwbsim^P_2\beta'_j$ which is a contradiction.
    \qed
\end{proof}

\section{Limitations of the Approximant Approach}\label{sec:limitations}
One severe limitation of the approximation method is that it cannot provide complexity bounds even if
successfully applied.
In this section we show that $\mwbsim^L$ is not guaranteed to stabilize at level $\omega*2$ and that word approximants $\mwbsim^W$
do not necessarily stabilize on level $\omega$.
From our counter-examples we derive lower bounds of $\omega^2$ and $\omega*2$ for the convergence indices of $\mwbsim^L$
and $\mwbsim^W$ respectively.
%which shows that any ,,easy'' extension of $\mwbsim^L$ will not solve the problem of the bisimulation relation in $BPP$ systems.    

\begin{theorem}\label{thm:hirschfeld}
    Long-Long approximants ($\mwbsim^L_i$) do not stabilize below level $\omega^2$ for BPP:
    $\mwbsim\ \neq\ \mwbsim^L_{\omega * k}\ $ for all finite $k$.
\end{theorem}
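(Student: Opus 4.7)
The plan is to exhibit, for each natural $k$, a BPP with an inequivalent pair of processes that is still related by $\mwbsim^L_{\omega \cdot k}$; any such family of witnesses immediately yields the claimed separation. We proceed by induction on $k$. The base case $k=1$ is essentially Example~\ref{example:gg}: with the long-long attack Spoiler still cannot distinguish $X, Y$ in fewer than $\omega$ rounds, because Duplicator may answer $X \wstep{b} Z$ with $Y \wstep{b} A^n$ for any chosen $n$ and then match Spoiler's subsequent long $a$-steps $A \wstep{a} \eps$ one by one.

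For the inductive step, given a BPP $\mathcal{B}_k$ with witness pair $(\alpha_k, \beta_k)$, construct $\mathcal{B}_{k+1}$ by stacking another Example~\ref{example:gg}-style layer on top. Introduce fresh variables $X_{k+1}, Y_{k+1}, Z_{k+1}$ with the same rule pattern as the outer gadget of Example~\ref{example:gg}, except that the role of the simple counter $A$ is now taken by a packaged variable $M_{k+1}$ that unfolds, under a fresh trigger action, into the inductive pair $\alpha_k, \beta_k$. Set $\alpha_{k+1} = X_{k+1}$ and $\beta_{k+1} = Y_{k+1}$.

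Two claims need verifying. First, $\alpha_{k+1} \not\mwbsim \beta_{k+1}$: Spoiler plays $X_{k+1} \step{b} Z_{k+1}$, Duplicator must respond $Y_{k+1} \wstep{b} M_{k+1}^n$ for some $n$, and Spoiler then reduces the $M_{k+1}$'s one at a time by triggering the inner $\mathcal{B}_k$ game and applying the inductive winning strategy, invoking Lemma~\ref{lem:congruence} to prevent the parallel copies from interfering. Second, $\alpha_{k+1} \mwbsim^L_{\omega \cdot (k+1)} \beta_{k+1}$: for any Spoiler attack with remaining budget $o' < \omega \cdot (k+1)$, write $o' = \omega \cdot k + m$ with $m$ finite, and have Duplicator reply with $M_{k+1}^n$ for $n > m$; since each $M_{k+1}$ survives $\omega \cdot k$ rounds of inner play by the inductive hypothesis, the $n$ copies collectively absorb more than $o'$ rounds of further play.

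The main obstacle is the decoupling argument for the second claim: one must show that Spoiler's long-long attacks cannot profitably mix moves from different copies of $M_{k+1}$ or interleave the outer $Z_{k+1}$-behaviour with the inner subgames in ways that defeat Duplicator's copy-by-copy strategy. Guarding the inner rules with the fresh trigger action, together with the congruence property of $\mwbsim^L_i$ (Lemma~\ref{lem:congruence}), should let us decompose any combined Spoiler strategy into independent one-layer strategies, after which the ordinal bookkeeping sketched above goes through.
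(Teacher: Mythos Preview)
There is a genuine gap in both the construction and the ordinal accounting.

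First, the ``packaged variable'' $M_{k+1}$ that ``unfolds into the inductive pair $\alpha_k,\beta_k$'' is not a well-defined BPP gadget.  A single variable has the same behaviour on both sides of the game; there is no mechanism by which triggering a copy of $M_{k+1}$ on the left produces $\alpha_k$ while triggering a copy on the right produces $\beta_k$.  The whole difficulty is to \emph{create} the asymmetry $\alpha_k$ vs.\ $\beta_k$ in the inner layer, and this has to be engineered by the outer layer, not delegated to a symmetric counter token.  In the paper this is done by an explicit asymmetric connector: after the outer guessing game one side sits in $W_1$ and the other in $S_1$, and the rules $W_1\step{c}X_0$, $S_1\step{c}Y_0$ route the play into the two \emph{different} entry points of the next layer.

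Second, the ordinal bookkeeping ``each $M_{k+1}$ survives $\omega\cdot k$ rounds, so $n$ copies absorb more than $\omega\cdot k + m$'' is not valid.  Having $n$ parallel copies of a subgame that individually last $\omega\cdot k$ rounds does not yield $\omega\cdot k + n$ rounds in total; Spoiler may interleave, and in any case $n\cdot(\omega\cdot k)=\omega\cdot k$ for finite $n$.  The correct shape of the argument (and what the paper does) is the opposite layering: the outer gadget forces a purely \emph{finite} delay of $n$ rounds---Duplicator chooses $n$ many simple $A$-tokens and Spoiler must peel them off one by one---and only \emph{after} this does the game enter the single inner subgame worth $\omega\cdot(k{-}1)$.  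That gives $\omega\cdot(k{-}1)+n$ for arbitrary $n$, hence $\omega\cdot k$.  So the counter stays simple across all layers; what changes from layer to layer is only the connector that hands control to the next $X_i,Y_i$ pair.
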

    \begin{proof}
        For $k<2$ the claim is trivial, e.g. by Example \ref{example:gg}. We first show how to
        construct a system with $\mwbsim\ \neq\ \mwbsim^L_{\omega +\omega}$. For this we recycle
        Example \ref{example:gg} and add the rule $X\step{\tau}XA$ and analyze the game on
        $X$ vs.\ $Y$ more carefully. The fact that $X$ can be silently rewritten to $Y$ forces \R\
        to start from $X$. Any optimal silent move for \R\ must change the equivalence class, so we
        can assume his initial move to be $X\wstep{b}ZA^m$. 
        %which is equivalent to $Z$ by point 3) of Proposition \ref{middle_lemma}.
        \V\ must respond to some $A^n$. To prevent a perfect match to an identical process in the next round,
        \R\ must again move from $ZA^m$ and may not end in a configuration $A^{<n}$. So \R\ will either move
        $Z\wstep{a}Z$ or $Z\wstep{a}A^m$ with $m\ge n$
        and thereby force \V\ to remove one $A$ on the other side. Observe that any one move
        from $Z$ or $A^m$ can be replied to by $A$, so \R\ has to keep making $a$-moves from his process
        until \V\ has exhausted all variables $A$. By removing only one $A$ in each such response, \V\ can
        prevent the situation $Z$ (or $A^{>0}$) vs.\ $\eps$
        for $n$ rounds, where $n$ is determined by his initial response. We conclude $X\mwbsim^L_\omega Y\not\mwbsim X$.
        
        To construct a counter-example to convergence at level $\omega+\omega$ we combine two copies of this system as
        indicated in Figure \ref{fig:HC-construction} below.

        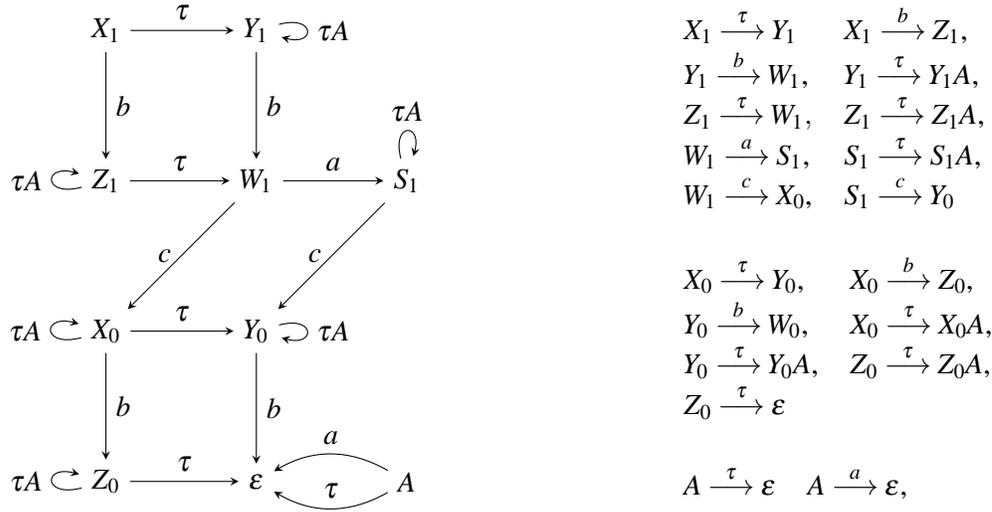
\begin{figure}[h]
            \caption{Combining two copies of the "Guessing Game" yields $X_1\mwbsim^L_{\omega*2}Y_1\not\mwbsim X_1$.}
            \label{fig:HC-construction}
        \begin{center}
            \begin{minipage}[c]{.55\textwidth}
                \begin{tikzpicture}                  
                    \node (X) {$X_1$};
                    \node (Z) [below of=X] {$Z_1$};
                    \node (Y) [right of=X] {$Y_1$}; 
		    \node (W) [below of=Y] {$W_1$};
                    \path[->] (X) edge node {$\tau$} (Y);
                    \path[->] (Z) edge node {$\tau$} (W);
                    \path[->] (X) edge node {$b$} (Z);
                    \path[->] (Y) edge node {$b$} (W);
                    \path[->] (Y) edge [loop right] node {$\tau A$} (Y);
                    \path[->] (Z) edge [loop left] node {$\tau A$} (Z);

		    \node (S0) [right of=W] {$S_1$};
                    \node (X0) [below of=Z] {$X_0$}; 
                    \node (Y0) [right of=X0]{$Y_0$};
                    \node (e) [below of=Y0] {$\eps$};
		    \node (Z0) [below of=X0] {$Z_0$};

                    \path[->] (X0) edge node {$\tau$} (Y0);
                    \path[->] (Z0) edge node {$\tau$} (e);
                    \path[->] (Y0) edge node {$b$} (e);
                    \path[->] (X0) edge node {$b$} (Z0);

                    \path[->] (Y0) edge [loop right] node [right]{$\tau A$} (Y0);
                    \path[->] (X0) edge [loop left] node {$\tau A$} (X0);
                    \path[->] (Z0) edge [loop left] node [left]{$\tau A$} (Z0);

                    \path[->] (W) edge node [left]{$c$} (X0);
                    \path[->] (W) edge node {$a$} (S0);
                    \path[->] (S0) edge node [left]{$c$} (Y0);
                    \path[->] (S0) edge [loop above] node {$\tau A$} (S);

		    \node (A) [right of=e] {$A$};
                    \path[->,bend right] (A) edge node [above] {$a$} (e);
                    \path[->,bend left]  (A) edge node [above] {$\tau$} (e);
               \end{tikzpicture}
            \end{minipage}
            \begin{minipage}[c]{.40\textwidth}
                \begin{tabular}{ll}
		    $X_1 \step{\tau} Y_1$&$X_1 \step{b} Z_1$, \\
                    $Y_1 \step{b} W_1$,& $Y_1 \step{\tau} Y_1 A$,\\ 
                    $Z_1 \step{\tau} W_1, $ & $ Z_1 \step{\tau} Z_1 A$, \\
		    $W_1 \step{a} S_1$, & $S_1 \step{\tau} S_1 A$,\\ 
                    $W_1 \step{c} X_0$, & $S_1 \step{c} Y_0$ 
                \end{tabular}
                \vspace{0.5cm}

                \begin{tabular}{ll}
		    $X_0 \step{\tau} Y_0$, &$X_0 \step{b} Z_0$,\\
                    $Y_0 \step{b} W_0$, &$X_0 \step{\tau} X_0 A$,\\
                    $Y_0 \step{\tau} Y_0 A$, & $ Z_0 \step{\tau} Z_0 A$, \\
		    $Z_0 \step{\tau} \eps$ & \\
                \end{tabular}
                \vspace{0.5cm}

                \begin{tabular}{ll}
                    $A \step{\tau} \eps$ & $ A \step{a} \eps$,\\ 
                \end{tabular}
            \end{minipage}
        \end{center}
        \end{figure}
    The bottom part of the construction is the gadget as discussed previously. Observe that variables
    $X_0,Y_0,Z_0$ are not able to produce variables from the top part of the
    diagram, those variables with an index $1$. Thus we preserve that $X_0 \mwbsim_{\omega}^L Y_0$.   
    Our aim is to show that indeed $X_1\mwbsim^L_{\omega+\omega}Y_1\not\mwbsim X_1$. For this it suffices to show
    that the only possibility for \R\ to win is to force the game from $X_1$ vs.\ $Y_1$ to end up in
    $X_0 \mwbsim_{\omega}^L Y_0$.
        
    The Game starts from a pair $X_1, Y_1$ and it goes through the upper square pattern $X_1, Y_1, Z_1, W_1$. By our previous
    discussion of this gadget, we know that \R\ has to start by $X_1\wstep{b}Z_1A^m$; \V\ will respond to
    $W_1A^n$.
    \R\ must continue to play from the left hand side in order to prevent a perfect match to identical processes
    and cannot move to a $W_1A^i$ for $i\le n$. If he makes a move $Z_1A^m\step{c}X_0A^i$, while the other process
    still contains a $W_1$, \V\ is able to match to the same process.
    So the only option left for \R\ is to force \V\ to remove all variables $A$ one by one by performing $a$-steps.
    %, for example by moving $Z_1\wstep{a} Z_1$. 
    Eventually, from a position $Z_1$ (or $W_1A^{>0}$) vs.\ $W_1$, \R\ makes one last $a$-step and thus forces \V\ to
    rewrite $W_1$ to $S_1$. Afterwards, \R\ can force the game to a position $X_0A^n$ vs.\ $Y_0A^m$ by playing a
    $c$-step from either side. This part of the game takes $n+1$ rounds and $n$ was chosen by \V\ in his first response.
    Therefore $X_1\mwbsim_{\omega+\omega}Y_1$ which completes the proof for $k=2$.

    The construction above can be extended to provide a counter-example for convergence at level $\omega * k$
    for any natural $k$ by stacking $k$ copies of the square gadget on top of each other.
    This can also be modified to a system which contains only variables of the norm zero.
    \qed
\end{proof}

    Next we focus on Word approximants and falsify a conjecture of Str\'{\i}brn{\'a} \cite{Str1998}
    about their convergence above level $\omega$.

    \begin{theorem}\label{thm:non_stabilize}
        %Word-approximants $\mwbsim^W$ are not finitely approximable for BPP:
        %$\mwbsim\ \neq\ \mwbsim^W_\omega$.
        For BPP, weak bisimilarity is not finitely approximable with word approximants: $\mwbsim\ \neq\ \mwbsim^W_\omega$.
    \end{theorem}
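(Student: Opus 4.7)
My plan is to construct a BPP description together with a pair of processes $\alpha,\beta$ witnessing both $\alpha \not\mwbsim \beta$ and $\alpha \mwbsim^W_n \beta$ for every finite $n$, by modifying the guessing gadget of Example \ref{example:gg}. First I would revisit that example to pinpoint why word approximants collapse there: Duplicator's response $Y \wstep{b} A^n$ exposes the guess $n$ as a multiplicity in the resulting state, so Spoiler's next round may attack with the word $a^{n+1}$ and win within two rounds. My modification therefore has to hide Duplicator's commitment so that it is not extractable by any single word attack.

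The planned modification exploits the flexibility of weak steps: in a matching $\beta \wstep{w} \beta'$, Duplicator may interleave silent $\tau$-productions at any point along the trajectory. I would design a gadget in which Duplicator's first-round response takes him to a configuration containing a ``generator'' variable that can silently spawn counter tokens \emph{on demand} during future weak-step matches, rather than committing in advance to a fixed supply of pre-produced tokens. For non-bisimilarity, Spoiler's strategy mirrors the one for Example \ref{example:gg}: after finitely many adaptive rounds, Duplicator's generator cannot keep pace with Spoiler's cumulative demand and a mismatch is forced. For $\mwbsim^W_n$-equivalence I would prove by induction on $n$ that Duplicator has a winning $n$-round strategy, maintaining at each round a reserve of generator power sufficient for the remaining $n-1$ rounds; each word-attack by Spoiler is matched by producing just enough counters \emph{within} the weak-step trajectory, so that the length of Spoiler's word does not by itself defeat the strategy. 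Transitivity of $\mwbsim^W_n$ from Lemma \ref{lem:congruence} would be used to compose the per-round arguments.

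The main obstacle is the gadget design itself. Since the full multiset of each configuration is observable, any static ``counter'' Duplicator leaves on the board is immediately attackable by a word of appropriate length in the very next round, which is exactly the short-long vulnerability of Example \ref{example:gg} magnified by word-attacks. The construction must therefore make Duplicator's reserve a \emph{dynamic} property, namely the \emph{presence} of a generator variable capable of further silent production, rather than a fixed exposed multiplicity. Making this rigorous requires carefully arguing that no bounded-length sequence of Spoiler's words can both force Duplicator to reveal a finite exposed quantity \emph{and} exploit it before the generator can refresh, yet in the infinite bisimilarity game Spoiler's iterated probing eventually strips away that capability.
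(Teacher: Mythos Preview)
Your high-level plan is aligned with the paper's approach: both start from the guessing gadget of Example~\ref{example:gg}, diagnose why a single word attack $a^{n+1}$ collapses it, and aim to protect Duplicator's commitment against arbitrary-length word moves while still leaving Spoiler a winning strategy in the unbounded game. Where your proposal diverges is in the mechanism you sketch, and that divergence is where the gap lies.

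You propose that Duplicator's reserve be ``the \emph{presence} of a generator variable capable of further silent production, rather than a fixed exposed multiplicity,'' and that Spoiler's ``iterated probing eventually strips away that capability.'' These two requirements are in tension in a way you do not resolve. If a single generator can always refresh silently, then it can refresh in the full bisimulation game just as well as in any finite approximant game, and nothing is ever stripped away; conversely, if the generator has a bounded refreshing capacity encoded anywhere in the state, a sufficiently long word attack can exhaust it in one round. Your plan does not say what the decreasing resource actually is or why a word move cannot touch it.

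The paper's construction resolves this tension differently from what you describe. Duplicator's first-round response is to $L^n$ for some $n$ of his choosing, so the commitment \emph{is} an exposed multiplicity --- but it is a multiplicity of \emph{generators}, not of tokens. Each $L$ can silently spawn an unbounded supply of one-shot counters $Q$ (via $L\step{\tau}LQ$, $Q\step{a}\eps$), so a single word attack $a^k$ of any length $k$ is absorbed without consuming any $L$. What does consume an $L$ is a two-round protocol: Spoiler forces the play through an auxiliary state $R$ (with $L\step{a}R$ and $R\step{a}R$), and distinguishing $R$ from the $Z$/$L$-side costs a further round. The net effect is the pair of claims $Z\mwbsim^W_{2n+1}L^n$ and $Z\not\mwbsim^W_{2n+2}L^n$, proved by induction on $n$. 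Thus the exposed quantity $n$ is robust against word length but vulnerable to round count, which is exactly the separation needed. Your inductive scheme and your intended use of the congruence in Lemma~\ref{lem:congruence} are both present in the paper's argument, but the specific gadget and the per-$L$ two-round cost are the ideas your plan is missing.
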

    \begin{proof}
        Consider the process description in Figure \ref{fig:Words}.
        \begin{figure}
            \caption{Counter-example for finite approximability of $\mwbsim^W_i$}
            \label{fig:Words}
        \begin{center}
            \begin{minipage}[l]{.6\textwidth}
            \begin{tikzpicture}         
                \node (X) {$X$};
                \node (Z) [below of=X] {$Z$};
                \node (Y) [right of=X] {$Y$}; 
                \node (e) [below of=Y] {$\varepsilon$};
                \node (L) [below of=e] {$L$};
                \node (R) [right of=L] {$R$};
                \node (Q) [left of=L] {$Q$};
                    
                \path[->] (X) edge node[left] {$a$} (Z);
                \path[->] (Y) edge node[right] {$a$} (e);
                \path[->] (X) edge node {$\tau$} (Y);
                \path[->] (Z) edge node {$\tau$} (e);
                \path[->] (Y) edge [loop above] node {$\tau L$} (Y);
                \path[->] (Z) edge [loop left] node {$\tau L$} (Z);

                \path[->] (R) edge [loop below] node {$a$} (R);
                \path[->, bend left] (Y) edge node {$\tau$} (R);
                \path[->] (L) edge node {$\tau$} (e);
                \path[->] (L) edge node[above] {$a$} (R);
                \path[->] (L) edge [loop below] node {$\tau Q$} (L);
                \path[->, bend right] (R) edge node[above] {$\tau$} (e);
	        \path[->, bend left] (Q) edge node {$a$} (e);
	        \path[->] (Q) edge node {$\tau$} (e);
            \end{tikzpicture}
            \end{minipage}
            \begin{minipage}[r]{.35\textwidth}
		\begin{tabular}{ll}
		    $X\step{a}Z$, &$X\step{\tau}Y$\\
                    $Y\step{a}\eps$, &$Y\step{\tau}YL$\\
                    $Y\step{\tau}R$\\
                    $Z\step{\tau}ZL$ , &$Z\step{\tau}\eps$\\

                    $L\step{a}R$, & $L\step{\tau}\eps$\\
                    $L\step{\tau}LQ$\\
                    $R\step{\tau}\eps$ ,& $R\step{a}R$\\
		    $Q\step{a}\eps$, & $Q\step{\tau}\eps$\\
                \end{tabular}
            \end{minipage}
        \end{center}
    \end{figure}
        By Proposition \ref{middle_lemma} part 3, we know that $ZL^nQ^m\mwbsim Z$ and $L^{n+1}Q^m\mwbsim L^{n+1}$
        for any two naturals $m, n$.
        We claim that $X\mwbsim^W_\omega Y\not\mwbsim X$ and base our proof on the following claims that are proven
        individually after the main argument. For $i,j,n\in\N$, $n>0$ we have
        \begin{align}
            &Z \not\mwbsim^W_3 RL^i \not\mwbsim^W_3 L^j\label{claim_1},\\
            &Z \mwbsim^W_{2n+1} L^n\label{claim_2},\\
            &Z \not\mwbsim^W_{2n+2} L^n\label{claim_3}.
        \end{align}
        In the game $X$ vs.\ $Y$ \R\ must start with a move $X\wstep{a}ZL^lQ^q\mwbsim Z$, as otherwise his
        opponent is able to match to the same process and thereby win. Possible responses for \V\ from $Y$ are: 
        \begin{itemize}
            \item To some $RL^nQ^m\mwbsim RL^n$, which allows \R\ to win in 3 further rounds by Claim \ref{claim_1}.
            \item To some $YL^nQ^m\mwbsim YL^n$ which allows \R\ to silently replace the $Y$ by $R$ and afterwards again win in $3$
                rounds by Claim \ref{claim_1}). Note that no silent response from $ZL^lQ^q$ to some
                configuration that contains $R$ is possible.
            \item To some $Q^m$ which allows \R\ to win in one round by playing $Z\wstep{a^{m+1}}Z$.
            \item To some $L^nQ^m\mwbsim L^n, n>0$ which allows \R\ to win but in not fewer than $2n+2$ rounds by Claims \ref{claim_2})
                and \ref{claim_3}).
        \end{itemize}
        The choice of $n$ is made by \V\ and therefore $X\mwbsim^W_\omega Y\not\mwbsim X$.
        Note that this counter-example uses only a single visible action and all variables have zero
        norm.
    \qed
\end{proof}

  It remains to proof claims \ref{claim_1}.-\ref{claim_3}.
  We first prove some auxiliary claims on which we base our arguments for claims 1) and 2). For all $m,n\in\N$,
  \begin{align}
      R L^n &\not\mwbsim^W_1 Q^m \label{claim_4}\\
      L^n &\not\mwbsim^W_2 R\ \not\mwbsim^W_2Z \label{claim_5}
  \end{align}
  For (\ref{claim_4}), observe that \V\ cannot respond to a move $R\wstep{a^{m+1}}R$.
  For (\ref{claim_5}), \R\ moves from $L^n$ (or $Z$) silently to $Q$ and \V\ can respond to $R$
  or to $\eps$. In the first case he loses in one round by claim (\ref{claim_4}), in the latter he
  cannot respond to move $Q\wstep{a}\eps$ from $\eps$.

\paragraph*{Claim (\ref{claim_1}): $Z \not\mwbsim^W_3 RL^n \not\mwbsim^W_3 L^m$.}
\begin{proof}
  For both parts \R\ moves from $RL^n$ silently to $R$. \V\ must respond either to $Q^k$ which is losing for him in
  one round by Claim (\ref{claim_4}), or to $L^jQ^k\mwbsim L^j$ or $Z L^jQ^k\mwbsim Z$, which is losing for him in two
  rounds by Claim (\ref{claim_5}).
\qed
\end{proof}

\paragraph*{Claim (\ref{claim_2}): $Z\mwbsim^W_{2n+1} L^n$ for $n>0$.}
\begin{proof}
By induction on $n\ge 1$ together with the claim that for any $m>n$, $L^m\mwbsim^W_{2n+1} L^n$.
%\begin{enumerate}
%    \item $Z\not\mwbsim^W_{2n+2} L^n$
%    \item $L^m\not\mwbsim^W_{2n+2} L^n$
%    \item $L^n\mwbsim^W_{2n+1} L^m$
%    \item $L^n\mwbsim^W_{2n+1} Z$
%\end{enumerate}
Base case $L\mwbsim^W_3L^m$. Wlog.\ assume that $m$ minimizes $k$ in $L\mwbsim^W_kL^m$
and that \R\ only makes optimal moves
i.e. wins as quickly as possible. This means in particular that he needs to change the equivalence class
in every move. Thus, he can move either $L^m\wstep{a^*}L^{m'}Q^q\mwbsim L^{m'}$ or to $L^{m'}RQ^q$
for some $l<m$. In both cases \V\ to stays in $L$.
In the first case, because we assume optimal moves, we must have $L\not\mwbsim^W_iL^{l}$ for some $i<k$, which contradicts the optimality of $m$.
Alternatively, the game continues from $L^{l}RQ^q$ vs.\ $L$.
\R\ must again move from $L^{l}RQ^q$
and change the class. If he makes an $a$-step to $R$ or ends in $Q^i$ \V\ can match to the same process,
a move to some $RL^{l'}$ or $L^{l'}, l'<l\le m$ is surely non-optimal. The only remaining option is to move silently
to $R$ to which \V\ will respond by $L\wstep{}L$. Now observe that $L\mwbsim^W_1 R$.

Base case $Z\mwbsim^W_3 L$: As Z can silently go to $L^n$, \R\ needs to start from $Z$.
He has three options to change the class from here: to some $L^lQ^q\mwbsim L^l$,
to $RL^lQ^q\mwbsim RL^l$ or to something equivalent to $ZR$. In all cases \V\ responds
to $L$ and in the first two cases, we can use provious claims $L\mwbsim^W_3L^m$ and $L\mwbsim^W_2RL^m$ to conclude
that this allows him to survive another $2$ rounds. If the second round starts in 
$L$ vs.\ $ZR$ (or equivalent), \R\ can again not move from $L$ and has three options to change the class:
to something equivalent to $Z$ which is non-optimal as it repeats the initial configuration.
Alternatively he can go to $RL^lQ^1\mwbsim RL^l$ or to $RQ^q\mwbsim R$. In both cases we complete by
the observation that $RL^l\mwbsim^W_1 L\mwbsim^W_1 R$.

For the induction step, assume $L^m\mwbsim^W_{2n+1} L^n$ and $Z\mwbsim^W_{2n+1} L^n$. We show that
$L^m\mwbsim^W_{2(n+1)+1} L^{(n+1)}$:
Just as in the base case, the only good move for \R\ is
$L^m\wstep{a}L^{m'}RQ^q$ for some $n<m'<m$. \V\ in his response goes to $L^{n} R$.
Next one more time \R\ has the only one reasonable kind of move, to a process equivalent to $L^{m''}$, where $m''>n$.
However now \V\ responds to $L^n$ and we use the induction assumption to the pair $L^{m''}\mwbsim^W_{2n+1}L^n$.

Observe that because $\mwbsim^W_{2n+1}$ is a congruence this implies also
$L^mR\mwbsim^W_{2n+1}L^nR$ for $m\ge n$.
To show that $Z\mwbsim^W_{2(n+1)+1} L^{(n+1)}$ we assume wlog. that \R\ initially moves $Z\wstep{a}ZR$,
\V\ responds by $L^{n+1}\wstep{a}L^nR$. Now to prevent a perfect match in the next round, \R\ moves from $ZR$
to either $Z$ or to $L^mR$ or $L^m$. In the first case, \V\ will remove the $R$ and end up in $L^n$ and we
can use the induction assumption, in the last two cases \V\ stays in $L^nR$ or goes to $L^n$. Either way, we can use
the previous claims that $L^mR\mwbsim^W_{2n+1}L^nR$ and $L^m\mwbsim^W_{2n+1}L^n$ for $m\ge n$.
\qed
\end{proof}

\paragraph*{Claim (\ref{claim_3}): $Z\not\mwbsim^W_{2n+2} L^n$ for $n>0$.}
\begin{proof}
By induction on $n\ge 1$ together with the claim that for any $m>n$, $L^m\not\mwbsim^W_{2n+2} L^n$.
Base case: $Z\not\mwbsim^W_{4}L\not\mwbsim^W_{4}L^m$.
\R\ plays $L^m \wstep{a} LR$ (or $Z\wstep{a} LR$). Possible responses from $L$ are
\begin{enumerate}
    \item to $L Q^{q}$ or $Q^q$, from which \R\ wins in 3 rounds by Claim \ref{claim_1}.
    \item to $R Q^{q}$ in this case \R\ performs a move $L R \wstep{\tau} Q^{q+1}$ and \V\ responds to
        either $R Q^{i}$ or $Q^{i}$ with $i\leq q$. In both cases \R\ wins in one round
        by claim (\ref{claim_4}) or playing an $a^{q+1}$-step from $Q^{q+1}$.
\end{enumerate}
For the induction step we assume $L^m\not\mwbsim^W_{2n+2} L^n\not\mwbsim^W_{2n+2} Z$ and show that both
$L^m\not\mwbsim^W_{2(n+1)+2} L^{(n+1)}$ and $Z\not\mwbsim^W_{2(n+1)+2} L^{(n+1)}$ hold.
\R\ moves from $L^m$ (or $Z$) in an $a$-step to $L^{n+1}R$. \V\ can respond
\begin{enumerate}
    \item to $L^n Q^{q}$ or some $Q^q$, from which \R\ wins in 3 rounds by Claim \ref{claim_1}.
    \item to $L^{n'}R Q^{q}, n'<n$. In this case \R\ performs a move $L^n R \wstep{\tau} L^n$ and \V\ responds
        either to $L^{n''}R Q^{i}$ or $L^{n''}Q^{i}$ with $n''\leq n'<n$. In the first case, \R\ wins in one round
        by claim (\ref{claim_4}). In the last case the game continues from $L^n$ vs.\ $L^{n''<n}$ and we can use
        the induction assumption. \qed
\end{enumerate}

\end{proof}
    \begin{remark}
        To construct a counter-example to convergence of word approximants at level $\omega+k$ for finite $k$, the previous construction can
        be complemented by a "finite ladder", where $X$ and $Y$ are renamed to $X_0$ and $Y_0$: For $0<i\le k$ add
        variables $X_i,Y_i,Z_i,Z'_i, W_i, W'_i$ and rules as indicated below.
     
    \end{remark}
    \begin{center}
          %\begin{minipage}[l]{.6\textwidth}
                \begin{tikzpicture}
                \node (X) {$X_i$};
                \node (Z) [right of=X] {$Z_i$};
		\node (Z1) [right of=Z] {$Z_i'$};
		\node (X1) [right of=Z1] {$X_{i-1}$};
                \node (Y) [below of=X,yshift=0.5cm] {$Y_i$}; 
                \node (W) [right of=Y] {$W_i$};
		\node (W1) [right of=W] {$W_i'$};
		\node (Y1) [right of=W1] {$Y_{i-1}$};
		\node (DSE) [right of=Y1,xshift=-0.7cm] {};
		\node (DSW) [right of=X1,xshift=-0.7cm] {};
		\node (DNE) [left of=Y,xshift=0.7cm] {};
		\node (DNW) [left of=X,xshift=0.7cm] {};
                    
                \path[->] (X) edge node {$a$} (Z);
		\path[->] (Z) edge node {$a$} (Z1);
		\path[->] (Z1) edge node {$a$} (X1);
		\path[->] (Y) edge node {$a$} (W);
		\path[->] (W) edge node {$a$} (W1);
		\path[->] (W1) edge node {$a$} (Y1);
  
		\path[->] (X) edge node {$\tau$} (Y);
		\path[->] (X1) edge node {$\tau$} (Y1);
		\path[->] (W) edge node {$a$} (Z1);
		
                \draw[dotted] (Y1) -- (DSE);
                \draw[dotted] (X1) -- (DSW);
                \draw[dotted] (Y) -- (DNE);
                \draw[dotted] (X) -- (DNW);
            \end{tikzpicture}
%           \end{minipage}
%            \begin{minipage}[r]{.35\textwidth}
%		\begin{tabular}{l}
%		    $X_i\step{a}Z_i$,\\
%                    $Z_i\step{a} Z'_i$,\\  
%                    $Z_i'\step{a} X_{i-1}$,\\
%                    \\
%                    $Y_i\step{a}W_i$,\\ 
%                    $W_i \step{a} W'_i$,\\
%                    $W'_i\step{a} Y_{i-1}$,\\
%                    \\
%                    $X_i\step{\tau}Y_i$,\\
%                    $W_i\step{a}Z'_i$.\\
%                \end{tabular}
%            \end{minipage}
        \end{center}

\section{Discussion}
In order to decide weak bisimulation for BPP or subclasses it suffices to provide a semi-decision
procedure for inequivalence. If we have some measure on which equivalent processes must agree,
we can define a new notion of approximants by additionally requiring that \V\ must preserve
equality on this measure in every round of an approximation game.
Conversely, one can think of properties as being captured by some notion of approximation $\mwbsim^O$:
If two processes disagree on the property then they are distinguished by $\mwbsim^O_i$ at some level $i\le\omega$.

As an example take the property \emph{norm preservation} of Claim 1) Proposition \ref{middle_lemma}:
Equivalent processes must have equal norms.
This is captured by Parikh or Word approximants because if two processes disagree on the norm, \R\ can distinguish
them in two rounds of the corresponding game by reducing the smaller one to a deadlock -- which cannot be done
in any proper response from the other side -- and playing an action from the non-deadlocked process
afterwards.
Another known invariant are the  \emph{distance to disabling} functions (dd-functions) used in \cite{Jan2003}
for strong bisimulation. If the shortest path from $\alpha$ to $\alpha'$
which disables any action $a$ is shorter than a shortest path from $\beta$ to a configuration which disables $a$
then $\alpha \not\mwbsim_2^P \beta$. So this first level of $dd$-functions is captured by Parikh approximants
at level $2$. We can continue this argument and say $n$-th order $dd$-functions
are capture by $\mwbsim_{n+1}^P$ relation. 

We have shown that all subclasses of BPP which are currently known to have decidable weak bisimulation are indeed
finitely approximable for some natural notion of approximation.
The lower bound of $\omega+\omega$ for the convergence of Word (and thus Parikh) approximants given by the construction in Theorem
\ref{thm:non_stabilize} leads us to the conclusion that we are in fact looking for a distinguishing property
that is orthogonal to Word approximants: It should still allow for decidable approximants
but at the same time it must be stronger than (not captured by) Word approximants
because otherwise it cannot be complete.

Our lower bound of $\omega*\omega$ for the symmetric short approximants $\mwbsim^L$ does not quite match the upper
bound of $\omega^\omega$ provided by \cite{HMS2006} and we conjecture that indeed, the exact convergence ordinal is $\omega * \omega$.

Finally, let us define the subclass of \emph{decreasing} systems in the following way.
\begin{definition}
 A $BPP$ description is \emph{decreasing} if there is a linear order on variables such that for every rule
 $X\lra[a] \alpha$ we have that $\alpha$ does not contain variables which are greater than $X$ in chosen order.
\end{definition}
It seems that this subclass provides much structure to work with. Nevertheless, all systems presented in this paper are
in fact decreasing. We believe that solving this class will be an important step towards a solution of the problem.

\paragraph{Acknowledgments.}
 We are grateful to the reviewers for their constructive feedback on earlier drafts of this document
 and thank Colin Stirling for many helpful discussions.
 
\bibliographystyle{eptcs}
\bibliography{eq}{}

% 
%\newpage
%\appendix
%\input{app_def}  % alternative definition for the approximants

%\input{app_pictures}

\end{document}